\tikzstyle{vertex}=[auto=left,circle,draw=black,fill=black!25,minimum size=20pt,inner sep=0pt]
\newtheorem{myclaim}{Claim}
\def\cqedsymbol{\ifmmode$\lrcorner$\else{\unskip\nobreak\hfil
\penalty50\hskip1em\null\nobreak\hfil$\lrcorner$
\parfillskip=0pt\finalhyphendemerits=0\endgraf}\fi}
\newcommand{\goodchi}{\protect\raisebox{2pt}{$\chi$}}
\newcommand{\chilin}{\goodchi_{\text{lin}}}
\newcommand{\chicen}{\goodchi_{\text{cen}}}
\newcommand{\tmax}{t_{\text{max}}}
\DeclareMathOperator*{\argmax}{arg\,max}
\DeclareMathOperator*{\td}{td}
\newcommand{\Schaffer}{Sch\"{a}ffer\xspace}
\newcommand{\Schaffers}{{\Schaffer}'s\xspace}
\newcommand{\Nesetril}{Ne\v{s}et\v{r}il }
\title{Polynomial Treedepth Bounds in Linear Colorings}
\author{Jeremy Kun \and
	Michael P.~O'Brien \and
	Marcin Pilipczuk \and
	Blair D.~Sullivan}
\institute{J. Kun \at Google, Mountain View, CA, USA \\\email{jkun@google.com} \and
	M. P. O'Brien \at North Carolina State University, Raleigh, NC, USA \\\email{mpobrie3@ncsu.edu} \and
	M. Pilipczuk \at University of Warsaw, Warsaw, Poland \\\email{malcin@mimuw.edu.pl} \and
	B. D. Sullivan \at North Carolina State University, Raleigh, NC, USA \\\email{blair\_sullivan@ncsu.edu}}
\begin{document}
	\maketitle
	\begin{abstract}
	\emph{Low-treedepth} colorings are an important tool for algorithms that exploit structure in classes of bounded expansion; they guarantee subgraphs that use few colors have \emph{bounded treedepth}.
	These colorings have an implicit tradeoff between the total number of colors used and the treedepth bound, and prior empirical work suggests that the former dominates the run time of existing algorithms in practice.
	We introduce \emph{$p$-linear colorings} as an alternative to the commonly used $p$-centered colorings.
	They can be efficiently computed in bounded expansion classes and use at most as many colors as $p$-centered colorings.
	Although a set of $k<p$ colors from a $p$-centered coloring induces a subgraph of treedepth at most $k$, the same number of colors from a $p$-linear coloring may induce subgraphs of larger treedepth.
	We establish a polynomial upper bound on the treedepth in general graphs, and give tighter bounds in trees and interval graphs via constructive coloring algorithms.
	We also give a co-NP-completeness reduction for recognizing $p$-linear colorings and discuss ways to overcome this limitation in practice.
	\ifthenelse{\boolean{arxiv}}{This preprint extends results that appeared in~\cite{lc_wg}; for full proofs omitted from~\cite{lc_wg}, see previous versions of this preprint.}{Some of these results appeared previously in~\cite{lc_wg}.}

	\keywords{Linear colorings \and p-centered colorings \and bounded expansion \and treedepth}
\end{abstract}

	\section{Introduction}
Algorithms for graph classes that exhibit \emph{bounded expansion} structure~\cite{boundedexpansion1,boundedexpansion2,boundedexpansion3,sparsity} offer a promising framework for efficiently solving many NP-hard problems on real-world networks.
The structural restrictions of bounded expansion, which allow for pockets of localized density in globally sparse graphs, are compatible with properties of many real-world networks such as clustering and heavy-tailed degree distributions.
Moreover, multiple random graph models designed to mimic these properties have been proven to asymptotically almost surely belong to classes of bounded expansion~\cite{demaine2014structural}.
From a theoretical perspective, graphs belonging to classes of bounded expansion can be characterized by \emph{low-treedepth colorings} of bounded size, i.e.~using only a small number of colors.
Roughly speaking, a low-treedepth coloring is one in which the subgraphs induced on each small set of colors have small \emph{treedepth}, a structural property stronger than treewidth.
This definition naturally implies an algorithmic pipeline~\cite{boundedexpansion2,demaine2014structural,dvorak2013testing} for classes of bounded expansion involving four stages:  computing a low-treedepth coloring, using the coloring to decompose the graph into subgraphs of small treedepth, solving the problem efficiently on each such subgraph, and combining the subsolutions to construct a global solution.
The complexities of algorithms using this paradigm often are of the form $O({k \choose p} 2^{d\log d}\cdot n^c)$ where $k$ is the coloring size and $d$ is the treedepth of the subgraphs.

A recent implementation~\cite{concuss} and experimental evaluation~\cite{concuss_benchmarking} of this pipeline has identified that the coloring size has a much larger effect on the run time than the treedepth in practice.
Although graphs in classes of bounded expansion are guaranteed to admit colorings of constant size with respect to the number of vertices, the only known polynomial-time algorithms for computing these colorings are approximations~\cite{sparsity}.
Consequently it is unclear to what extent our current coloring algorithms can be altered to reduce the coloring size.
A more viable approach to improving the performance of the algorithmic pipeline without significant high-level changes would be to develop a new type of low-treedepth coloring that uses fewer colors but potentially has weaker guarantees about the treedepth of the subgraphs.

The traditional low-treedepth colorings for classes of bounded expansion are known as \emph{$p$-centered colorings}.
This name stems from the property that on any subgraph $H$, a $p$-centered coloring either uses at least $p$ colors or is a \emph{centered coloring}, which restricts the multiplicity of colors in induced subgraphs.
In this paper we introduce an alternative that closely mirrors this paradigm but only extends the color multiplicity guarantees to path subgraphs.
For this reason we refer to them as \emph{$p$-linear colorings} and \emph{linear colorings}.
We identify that $p$-linear colorings share three important properties with $p$-centered colorings that allow them to be used in the bounded expansion algorithmic pipeline.
\begin{enumerate}
	\item The minimum coloring size is constant in graphs of bounded expansion.
	\item A coloring of bounded size can be computed in polynomial time.
	\item Small sets of colors induce graphs of small treedepth.
\end{enumerate}
The third of these properties is of particular interest, since understanding the tradeoffs between coloring size and treedepth in switching between $p$-centered and $p$-linear colorings fundamentally depends on bounding the maximum treedepth of a graph that admits a linear coloring with $k$ colors.
Equivalently, we frame this problem as determining the gap between the minimum number of colors needed for a linear versus a centered coloring in any given graph.
Using a grid minors approach, we prove that the minimum size of a centered coloring is polynomially bounded in the minimum size of a centered coloring.
Because the ``heavy machinery'' of this approach likely does not give a tight bound, we give stronger upper bounds on the gap in trees and interval graphs and a matching lower bound for binary trees.
Surprisingly, we also prove that some $p$-linear colorings cannot be verified in polynomial time unless $\text{P} = \text{co-NP}$ and discuss the practical implications of these findings.
Some results in this paper appeared previously in WG 2018~\cite{lc_wg}.
This version adds a polynomial treedepth upper bound for general graphs, as well as tighter lower and upper bounds for trees.

	\section{Definitions and Background}
In this section we detail the background and terminology necessary to understand $p$-linear colorings.

\subsection{Graph Terminology}
	We denote the vertices and edges of a graph $G$ as $V(G)$ and $E(G)$, respectively, and assume all graphs are simple and undirected except where specifically noted otherwise.
	The \emph{open neighborhood} of a vertex $v$, denoted $N(v)$, is the set of vertices $u$ such that $uv\in E(G)$, while the \emph{closed neighborhood}, $N[v]$ is defined as $N(v)\cup \{v\}$.
	Vertex $a$ is an \emph{apex} with respect to a subgraph $H$ if $V(H) \subseteq N(a)$.

	We say $P$ is a \emph{$v_1v_\ell$-path} if $V(P) = \{v_1,\dots, v_\ell\}$ for distinct $v_1,\dots, v_\ell$ and $E(P) = \{v_iv_{i+1} : 1\leq i \leq \ell-1\}$; we will notate this as $P=v_1,\dots, v_\ell$.
	Given disjoint paths $P = v_1,\dots, v_\ell$ and $Q = u_1,\dots, u_{\ell'}$, the path $P\cdot Q = v_1,\dots, v_\ell, u_1,\dots, u_{\ell'}$ is the \emph{concatenation} of $P$ and $Q$ if $v_\ell$ and $u_1$ are adjacent.
	A path is \emph{Hamiltonian} with respect to subgraph $H$ if $V(P) = V(H)$.

	In a rooted tree $T$, we let $T_v$ be the subtree of $T$ rooted at $v$ and the \emph{leaf paths} of $T_v$ be the set of paths from a leaf of $T_v$ to $v$.
	We label the levels of $T$ from bottom to top starting from 1; that is, if $D$ is the maximum distance from a leaf to the root then the root is the only vertex in level $D+1$ and level $i$ consists of all vertices whose parents are in level $i+1$.
	Vertices $u$ and $v$ are \emph{unrelated} in $T$ if $u$ is neither an ancestor nor a descendant of $v$.

	A \emph{coloring} $\phi$ of a graph $G$ is a mapping of the vertices of $G$ to \emph{colors} $1,\dots, k$ and has \emph{size} $|\phi|=k$.
	A coloring is \emph{proper} if no pair of adjacent vertices have the same color.
	For any subgraph $H$ and color $c$, if there is exactly one vertex $v\in H$ such that $\phi(v) = c$ we say $c$ \emph{appears uniquely} in $H$ and $v$ is a \emph{center} of $H$.
	A subgraph with no unique color is said to be \emph{non-centered}.

	We use the notation $X = Y_1\uplus\dots \uplus Y_\ell$ to denote that $Y_1,\dots, Y_\ell$ form a \emph{partition} of $X$; that is, $X = Y_1\cup\dots \cup Y_\ell$ and the sets $Y_1,\dots Y_\ell$ are pairwise disjoint.

\subsection{$p$-Centered Colorings and Bounded Expansion}
	\begin{definition}
		A \emph{$p$-centered coloring} $\phi$ of graph $G$ is a coloring such that for every connected subgraph $H$, $H$ has a center or $\phi|_{H}$ uses at least $p$ colors.
	\end{definition}
	\Nesetril and Ossana de Mendez established that bounding the minimum size of a $p$-centered coloring is a necessary and sufficient condition for a graph class to have bounded expansion.
	\begin{proposition}[\cite{boundedexpansion1}]
		A class of graphs $\mathcal C$ has bounded expansion iff there exists a function $f$ such that for all $G\in \mathcal C$ and all $p\geq 1$, $G$ admits a $p$-centered coloring with $f(p)$ colors.
	\end{proposition}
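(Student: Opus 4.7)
The plan is to prove the two directions separately, as the equivalence is a standard characterization of bounded expansion.

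For the forward implication ($\Rightarrow$), the approach is to use \emph{transitive fraternal augmentations}. Starting from $G$, iteratively construct oriented graphs $\vec{G}_1 \subseteq \vec{G}_2 \subseteq \cdots$ where $\vec{G}_{i+1}$ is obtained from $\vec{G}_i$ by adding an edge between every pair of vertices sharing an out-neighbor (fraternal) and between endpoints of any directed length-two walk (transitive), then reorienting all edges so that the maximum in-degree stays bounded by a function of the expansion bound of $\mathcal{C}$. The bounded expansion hypothesis is precisely what guarantees such a low-in-degree orientation exists at each step, since the augmented graphs are themselves shallow topological minors of $G$ up to depth $O(2^i)$. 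After $p$ rounds, properly color the underlying undirected graph $G_p$ greedily, using at most $\Delta^-(G_p)+1$ colors; call this coloring $\phi$.

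To verify $\phi$ is $p$-centered on $G$, suppose some connected subgraph $H \subseteq G$ uses fewer than $p$ colors and has no center. A pigeonhole/induction argument tracing common colors through the augmentations produces, between some pair of same-colored vertices in $H$, an edge of $G_p$ created during augmentation, contradicting properness of $\phi$. For the reverse implication ($\Leftarrow$), assume every $G \in \mathcal{C}$ admits a $p$-centered coloring of size at most $f(p)$. To bound edge density in an $r$-shallow minor $H$ of $G$ realized by branch sets $\{B_v : v \in V(H)\}$ of radius $\leq r$, take a $(2r+2)$-centered coloring $\phi$ of $G$. Any connected subgraph on fewer than $2r+2$ colors must have a center, so by a recursive argument its treedepth is bounded by $2r+1$, which in turn bounds its edge count per vertex. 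Expanding each branch set plus a chosen connecting path for each edge of $H$ into a connected subgraph of $G$, one extracts from the coloring restricted to the right set of colors a linear bound on $|E(H)|/|V(H)|$ in terms of $f(2r+2)$ and $r$.

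The main obstacle is the forward direction: verifying that bounded expansion is preserved under transitive fraternal augmentations while simultaneously controlling the in-degree. This reduces to the nontrivial structural lemma (due to \Nesetril and Ossona de Mendez) that shallow minors of augmented graphs have edge density bounded in terms of shallow minor densities of $G$ at a larger depth. Once this is in hand, the $p$-centered property follows from a clean pigeonhole on color patterns along augmentation edges, and the reverse direction is essentially a bookkeeping argument relating color patterns on paths to branch-set densities. The final function $f(p)$ one obtains is, unfortunately, far from tight---this looseness is exactly the motivation for seeking alternatives such as $p$-linear colorings.
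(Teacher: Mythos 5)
First, a point of reference: the paper does not prove this proposition at all---it is imported verbatim from~\cite{boundedexpansion1} as a black box---so there is no in-paper argument to compare yours against. Your sketch follows the standard route of \Nesetril and Ossona de Mendez: transitive fraternal augmentations with controlled reorientation for the forward direction, and a density count over small color classes for the reverse. That is the right skeleton.

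As a proof, however, the proposal has two genuine gaps, both in the forward direction. The first is quantitative: you assert that $p$ augmentation rounds suffice before greedily coloring, but the number of rounds needed to guarantee the $p$-centered property is a much larger function of $p$ --- the paper itself quotes $(2\log p)^p$ iterations for the distance-truncated variant, and the original argument likewise needs a number of rounds growing rapidly with $p$, not $p$ itself; with only $p$ rounds the verification step fails. The second and more serious gap is that verification step itself: the claim that a connected subgraph $H$ on fewer than $p$ colors with no center forces a monochromatic augmentation edge \emph{is} the content of the theorem, and ``a pigeonhole/induction argument tracing common colors through the augmentations'' does not establish it. The real argument must exhibit, for a repeated color in $H$, a fraternal or transitive configuration at some augmentation level, and this rests on an induction over the number of colors used by $H$ interleaved with the in-degree invariants maintained by the reorientation step; none of that machinery is set up here. (The reverse direction is essentially right as sketched, modulo making explicit that treedepth is minor-monotone, so the portion of $H$ realized within a fixed set of $2r+2$ colors contributes at most $(2r+1)|V(H)|$ edges, giving $|E(H)|/|V(H)|$ bounded by roughly $\binom{f(2r+2)}{2r+2}(2r+1)$.) As it stands the proposal is a correct table of contents for the known proof rather than a proof.
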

	There are varying methods to compute $p$-centered colorings, such as transitive-fraternal augmentations~\cite{boundedexpansion1,grohe2017deciding} and generalized coloring numbers~\cite{zhu2009colouring}, we focus here on \emph{distance-truncated transitive-fraternal augmentations} (DTFAs)~\cite{FelixThesis}, which iteratively augment the graph with additional edges to impose constraints on proper colorings.
	This linear time algorithm guarantees that after $(2\log p)^p$ DTFA iterations, any proper coloring of the augmented graph is a $p$-centered coloring whose size is  bounded in classes of bounded expansion.

\subsection{Centered Colorings and Treedepth}
	Note that if $\phi$ is a $p$-centered coloring of $G$ and $H$ is a subgraph of $G$ whose vertices use at most $p-1$ colors in $\phi$, $H$ must have a center.
	This relates $p$-centered colorings to a more restricted class of graphs defined by \emph{centered colorings}.
	\begin{definition}\label{def:centered_coloring}
		A \emph{centered coloring} $\phi$ of graph $G$ is a coloring such that every connected subgraph has a center.
		The minimum size of a centered coloring of $G$ is denoted $\chicen(G)$.
	\end{definition}
	Note that a centered coloring is also proper, or else there would be a connected subgraph of size two with no center.
	Observe that if $X$ is the set of all centers of $G$, then $G\backslash X$ must either be empty or disconnected.
	This implies that if $|G|\gg \chicen(G)$, then $G$ breaks into many components after only a few vertex deletions.
	This property is captured by \emph{treedepth decompositions}.
	\begin{definition}
		A \emph{treedepth decomposition} $\mathcal T$ of graph $G$ is a rooted forest with the same vertex set as $G$ such that $uv\in E(G)$ implies $u$ is an ancestor of $v$ in $\mathcal T$ or vice versa.
		The \emph{depth} of $\mathcal T$ is the length of the longest path from a leaf of $\mathcal T$ to the root of its component.
		The \emph{treedepth} of $G$, $\td(G)$, is the minimum depth of a treedepth decomposition of $G$.
	\end{definition}

	Given a centered coloring of size $k$, we can generate a treedepth decomposition of depth at most $k$ by choosing any center $v$ to be the root and setting the children of $v$ to be the roots of the treedepth decompositions of the components of $G\backslash\{v\}$.
	Likewise, given a treedepth decomposition of depth $k$, we can generate a centered coloring using $k$ colors by bijectively assigning the colors to levels of the tree and coloring vertices according to their level.
	We refer to the colorings and decompositions resulting from these procedures as \emph{canonical}; together they imply that the treedepth and centered coloring numbers are equal for all graphs.

\section{$p$-Linear and Linear Colorings}
	We introduce \emph{$p$-linear colorings} as an alternative to $p$-centered colorings.
	\begin{definition}\label{def:p_linear}
		A \emph{$p$-linear coloring} is a coloring $\psi$ of a graph $G$ such that for every path\footnote{This includes non-induced paths.} $P$, either $P$ has a center or $\psi|_P$ uses at least $p$ colors.
	\end{definition}
	It is proven in~\cite{FelixThesis} that after performing $2^p$ DTFA iterations, any proper coloring of the augmented graph is a $p$-linear coloring.
	This implies that $p$-linear colorings indeed have constant size in bounded expansion classes and can be constructed in polynomial time (like $p$-centered colorings).

	In the interest of maintaining consistency with prior terminology, we define \emph{linear colorings} analogously to centered colorings.
	\begin{definition}\label{def:linear_coloring}
		A \emph{linear coloring} is a coloring $\psi$ of a graph $G$ such that every path has a center.
		The \emph{linear coloring number} is the minimum number of colors needed for a linear coloring and is denoted $\chilin(G)$.
	\end{definition}
	Note that linear colorings must also be proper.
	A simple recursive argument shows that every path of length $d$ requires at least $\log_2(d+1)$ colors in a linear coloring; thus a graph of linear coloring number $k$ has no path of length $2^k$.
	Because every depth-first search tree is a treedepth decomposition, $\td(G) \leq 2^{\chilin(G)}$, proving that small numbers of colors in $p$-linear colorings induce graphs of bounded treedepth\footnote{This tightens a bound in~\cite{FelixThesis} from double to single exponential.}.

	Our study of the divergence between linear and centered coloring numbers will naturally focus on linear colorings that are not also centered colorings.
	We say $\psi$ is a \emph{non-centered linear coloring} (NCLC) of graph $G$ if $G$ contains a connected induced subgraph with no center.
	For NCLC $\psi$, we say a connected induced subgraph $H$ is a \emph{witness} to $\psi$ if $H$ is non-centered but every proper connected subgraph of $H$ has a center.
	For the sake of completeness, we prove in Lemma~\ref{lem:lin_cen_eq} that many simple graph classes do not admit NCLCs.
	\begin{lemma}\label{lem:lin_cen_eq}
		If $G$ is a cograph, has maximum degree 2, or has independence number 2, any linear coloring of $G$ is also a centered coloring.
	\end{lemma}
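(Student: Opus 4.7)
The plan is to unify the three subcases via a common reduction: if every connected induced subgraph of $G$ admits a Hamiltonian path, then any linear coloring of $G$ is automatically centered, since the defining property of the linear coloring, applied to such a Hamiltonian path in a connected induced subgraph $H$, yields a vertex whose color is unique on the path and therefore on the whole of $H$. This reduction immediately settles two of the three cases: when $G$ has maximum degree at most $2$, every connected induced subgraph is a path or cycle, and a cycle becomes a Hamiltonian path after removing one edge. When $G$ has independence number at most $2$, I would use a classical longest-path argument: take a longest path $P = v_1,\ldots,v_k$ in a connected induced subgraph $H$ and suppose some vertex $u$ lies outside $V(P)$; maximality of $P$ rules out the edges $uv_1$ and $uv_k$, so $\alpha(H) \le 2$ forces $v_1 v_k \in E(H)$, and then the Hamiltonian cycle on $V(P)$ can be opened at any neighbor of $u$ to give a path longer than $P$, a contradiction.

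The main obstacle is the cograph case, since connected cographs such as $K_{1,3}$ need not admit Hamiltonian paths. Here I would argue by contradiction, taking $H$ to be a minimal non-centered connected induced subgraph of $G$ (that is, a witness) and exploiting the fact that every connected cograph on at least two vertices decomposes as a join $H = H_1 \vee H_2$ of two cographs. If $|H_1|=1$ with $H_1 = \{v\}$, then for each $u \in H_2$ the edge $vu$ is a path that under $\psi$ must have a center, forcing $\psi(v) \neq \psi(u)$; consequently $\psi(v)$ is unique on $V(H)$, so $v$ itself is a center of $H$, contradicting the choice of $H$.

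It remains to handle the case $|H_1|, |H_2| \ge 2$. Here deleting any single vertex of $H$ leaves a join of two nonempty cographs, which is again connected, so by minimality every $H \setminus \{v\}$ has a center $w_v$. Since $H$ itself has no center we must have $\psi(v) = \psi(w_v)$, because otherwise $w_v$ would remain a center in $H$; this forces every color to appear on exactly two vertices of $H$. Those two occurrences cannot be adjacent, since a monochromatic edge is a two-vertex path with no center, so each color class lies entirely on one side of the join. Picking a duplicated color $c$ in $H_1$ realized at $v, v'$ and a duplicated color $d$ in $H_2$ realized at $u, u'$, the four-vertex path $v, u, v', u'$ uses only join edges and is colored $c, d, c, d$; it has no center, contradicting the linear coloring property of $\psi$ and completing the cograph case.
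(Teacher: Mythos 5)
Your proof is correct and follows essentially the same route as the paper: the Hamiltonian-path reduction handles the maximum-degree-2 and independence-number-2 cases, and the cograph case uses the same join decomposition plus a $c,d,c,d$ alternating path (the paper proves the $\alpha \le 2$ case by induction on vertex removal rather than your longest-path rotation, and skips your minimality detour in the cograph case by noting directly that non-centeredness forces every color to appear at least twice, but these are cosmetic differences). The only nit: in the longest-path argument you should choose $u$ to be a vertex outside $V(P)$ that has a neighbor on $P$ (such a vertex exists by connectivity of $H$), since otherwise ``opening the cycle at a neighbor of $u$'' is not available.
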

	\begin{proof}
		We analyze each graph class separately below.\\
		\noindent\textbf{Maximum degree 2:}
			Let $G$ be a graph of maximum degree 2.
			Each connected induced subgraph of $G$ is either a path or a cycle, both of which have a Hamiltonian path.
			Thus every connected subgraph has a center, making any linear coloring centered.

			\noindent\textbf{Cographs:}
			Let $\psi$ be an NCLC of cograph $G$ and $H$ be a witness to $\psi$.
			If $\psi|_H$ only contains one color, $H$ is an isolated vertex and the coloring is centered.
			Thus, we may assume $\psi|_H$ has at least two colors.
			Because $H$ is a cograph, we can partition its vertices into nonempty sets $X, Y$ such that $xy$ is an edge in $H$ for all $x\in X$ and $y\in Y$.
			But since $\psi$ is proper, every pair of vertices with the same color must lie in the same set $X$ or $Y$.
			Since every color in $\psi|_H$ appears at least twice, there are vertices $\{v,v'\}\in X$ and $\{u,u'\}\in Y$ such that $\psi(v) = \psi(v')$ and $\psi(u) = \psi(u')$ but $\psi(v)\neq \psi(u)$.
			But then $v,u,v'u'$ form a path with no center and thus $\psi$ is not a linear coloring.

			\noindent\textbf{Independence number 2:}
			Since independence number is hereditary, it is sufficient to show every connected graph of independence number 2 has a Hamiltonian path.
			We prove this by induction on the number of vertices, observing that an isolated vertex has a trivial Hamiltonian path.
			Let $G$ be a graph of independence number 2 and $v\in G$ a vertex such that $G\setminus \{v\}$ is connected, e.g., $v$ is a leaf in a minimum spanning tree of $G$.
			If $G\setminus \{v\}$ has a Hamiltonian cycle, then $G$ must have a Hamiltonian path.
			Otherwise, by the inductive hypothesis $G\setminus \{v\}$ has a Hamiltonian path whose endpoints are some non-adjacent pair of vertices $u,w$.
			Either $v$ is adjacent to one of $u, w$, in which case $G$ has a Hamiltonian path, or $\{u, w, v\}$ form an independent set of size 3.
	\hfill\qed\end{proof}
	The classes described in Lemma~\ref{lem:lin_cen_eq} are maximal in the sense that there are graphs with independence number 3 (graph $R_3$ described in Lemma~\ref{lem:recursive_clique}) and binary trees (Lemma~\ref{lem:binary_tree_lc}) that admit NCLCs.

	\section{Treedepth Lower Bounds}
To understand the tradeoff between the number of colors and treedepth of small color sets when using $p$-linear colorings in lieu of $p$-centered colorings, it is important to know the maximum treedepth of a graph of fixed linear coloring number $k$, $\tmax(k)$.
In Lemmas~\ref{lem:recursive_clique} and~\ref{lem:binary_tree_lc}, we prove lower bounds on $\tmax(k)$ through explicit constructions of graph families.
In order to show that these graphs have large treedepth, we first establish assumptions about the structure of treedepth decompositions that can be made without loss of generality.
\begin{lemma}\label{lem:apex_order}
	Let $G$ be a graph and $S\subset V(G)$ such that $G[S]$ is connected and with respect to some component $C\in G\backslash S$, every vertex in $S$ is an apex of $C$.
	Then for any treedepth decomposition $\mathcal T$ of $G$ with depth $k$, we can construct a treedepth decomposition $\mathcal T'$ such that:
	\begin{enumerate}
		\item $\operatorname{depth}(\mathcal T') \leq k$
		\item Each vertex in $S$ is an ancestor of every vertex in $C$ in $\mathcal T'$
		\item For each pair of vertices $\{u,w\}\subseteq V(C)$ or $\{u,w\}\subseteq V(G\setminus C)$, $u$ is an ancestor of $w$ in $\mathcal T'$ iff it is an ancestor of $w$ in $\mathcal T$.
	\end{enumerate}
\end{lemma}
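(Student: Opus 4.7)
My plan is to locate the topmost vertex of $V(C)$ in $\mathcal{T}$, exploit the apex condition to force the portion of $S$ that sits below it into a chain-like region, and then lift that region above $V(C)$ without stretching any root-to-leaf path of $\mathcal{T}$.

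First I would let $\hat c$ be the least common ancestor of $V(C)$ in $\mathcal T$ and observe that $\hat c\in V(C)$: otherwise the connectedness of $V(C)$ in $G$ would produce an edge of $G[V(C)]$ between two incomparable subtrees of $\hat c$, contradicting the treedepth-decomposition property. Because every $s\in S$ is adjacent to $\hat c$, $s$ is comparable with $\hat c$ in $\mathcal T$, so $S$ splits cleanly into $S_\uparrow$ (strict ancestors of $\hat c$) and $S_\downarrow$ (strict descendants). The vertices of $S_\uparrow$ already lie above every vertex of $V(C)$; if $S_\downarrow=\emptyset$ I simply take $\mathcal T'=\mathcal T$.

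For the nontrivial case I would iterate the LCA argument inside the subtree $\mathcal T_{\hat c}$: since each $s\in S_\downarrow$ is adjacent in $G$ to every $c'\in V(C)\setminus\{\hat c\}$ and hence comparable in $\mathcal T$ with every such $c'$, and since $V(C)$ is connected, $V(C)\setminus\{\hat c\}$ and $S_\downarrow$ must lie together in a single subtree rooted at some child of $\hat c$. Iterating this argument level by level produces a ``spine'' in $\mathcal T_{\hat c}$ along which $V(C)$ and $S_\downarrow$ are interleaved. I would then construct $\mathcal T'$ by permuting this spine so that every $S_\downarrow$-vertex appears above every $V(C)$-vertex (preserving the relative order within each of the two groups), and leave every off-spine subtree attached at the same position as before. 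Because the surgery is an in-place permutation along a single root-to-leaf path, no path grows longer, so depth is preserved, giving (1); condition (2) follows because every $s\in S$ now lies on the ancestral chain of $\hat c$ and hence of all of $V(C)$; condition (3) follows because $V(C)$-vertices are only permuted with $S$-vertices (never among themselves) and non-$S$ vertices of $V(G)\setminus V(C)$ are untouched.

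The main obstacle is the spine claim: proving that under the apex condition, $S_\downarrow$ really inherits a chain structure inside $\mathcal T_{\hat c}$ compatible with condition (3), rather than branching into incomparable subtrees below some $V(C)$-vertex. This is where both hypotheses—connectedness of $G[S]$ and each $s\in S$ being an apex of $C$—must be used in a coordinated way to rule out branching configurations, and once this structural lemma is in place the verification of (1)--(3) reduces to routine bookkeeping on the permutation of the spine.
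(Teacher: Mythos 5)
Your route is genuinely different from the paper's---the paper takes the canonical centered coloring of $\mathcal T$ (colors corresponding to levels) and rebuilds a canonical decomposition from it, breaking ties so that unique-colored vertices outside $C$ are always removed before vertices of $C$---but the obstacle you flag at the end of your write-up is not a technical lemma awaiting proof; it is false. Connectedness of $G[S]$ does not make the vertices of $S$ pairwise adjacent, so two vertices of $S_\downarrow$ need not be comparable in $\mathcal T$, and both can sit strictly below all of $V(C)$ in incomparable subtrees. Concretely, let $G$ be $K_4$ minus one edge, with $S=\{s_0,s_1,s_2\}$ inducing the path $s_1,s_0,s_2$ and $C=\{c\}$ the fourth vertex, which is adjacent to all of $S$. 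The decomposition with root $s_0$, child $c$, and $s_1,s_2$ as two separate children of $c$ is a valid treedepth decomposition whose longest root-to-leaf path has three vertices, yet here $\hat c = c$ and $S_\downarrow=\{s_1,s_2\}$ branches immediately below it. No permutation of a single spine can repair this: condition (2) would force $s_0,s_1,s_2$ to be pairwise comparable ancestors of $c$, producing a chain of four vertices and hence strictly greater depth. So your structural claim fails, and in fact the lemma itself fails for $|S|\geq 2$; the paper only ever invokes it with $S$ a single apex vertex, and your argument is best read as an attempt at that case. (The paper's own tie-breaking argument quietly has the same gap---on the example above it is forced to remove $c$ while $s_1,s_2$ remain---but that is a separate matter.)

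Even restricted to $|S|=1$, where your surgery is essentially sound in spirit, two pieces of bookkeeping are missing. The spine between $\hat c$ and the apex $a$ may contain vertices of $V(G)\setminus(V(C)\cup S)$, and off-spine subtrees may hang anywhere along it; you specify the permutation only on the two groups $V(C)$ and $S_\downarrow$. If such a vertex $z$ keeps its position while $a$ moves above $\hat c$, the ancestor relation within the pair $\{a,z\}\subseteq V(G)\setminus V(C)$ is reversed, violating condition (3); if instead $z$ travels upward with $a$, it may become a new ancestor of vertices in off-spine subtrees with which it was previously incomparable, again violating (3). Moreover, to see that sliding $V(C)$-vertices downward past other spine vertices breaks no edge, you need the fact that $N[v]\subseteq V(C)\cup S$ for every $v\in V(C)$ (because $C$ is a component of $G\setminus S$)---a hypothesis your argument never uses but which is exactly what the paper's proof leans on. A complete proof along your lines must state where every spine vertex and every hanging subtree lands, and verify (3) and the edge conditions against that explicit description.
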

\begin{proof}
	Let $\phi$ be a canonical centered coloring of $G$ with respect to $\mathcal T$.
	Let $\mathcal T'$ be a canonical treedepth decomposition with respect to $\phi$; if there are multiple vertices of unique color, prioritize removing those outside $C$ before members of $C$, and then small colors over large colors, i.e., remove color 2 before color 5.
	Since $\mathcal T'$ is derived from a centered coloring with $k$ colors, its depth is at most $k$, satisfying condition 1.

		Condition 2 is satisfied as long each member of $S$ is removed in the construction of $\mathcal T'$ before any member of $C$.
	Note that since $S$ contains apex vertices with respect to $C$ and every vertex $v\in V(C)$ satisfies $N[v]\subseteq V(C)\cup V(S)$, the removal of any vertex from $C$ cannot disconnect a previously connected component if $S$ has not been removed.
	Thus at any point in the algorithm before the removal of $S$ if a vertex in $C$ has a unique color in its remaining component $H$, there must be another vertex in $H\backslash C$ of unique color as well.
	Consequently, we will never be forced to remove any vertex of $C$ before $S$.

		To prove condition 3 is satisfied, observe that $u$ is an ancestor of $w$ in $\mathcal T'$ iff there is a connected subgraph $H$ containing $u$ and $w$ and no vertex with color smaller than $\psi(u)$.
	As stated previously, $G\backslash C$ is a connected subgraph, which means that there is a subgraph witnessing this ancestor-descendant relationship between $u$ and $w$ such that $H\cap C = \emptyset$ if $u\notin C$ and $H\cap (G\backslash C) = \emptyset$ if $u\in C$.
	Thus the relationships in $\mathcal T$ are preserved in $\mathcal T'$.
\hfill\qed\end{proof}

Using Lemma~\ref{lem:apex_order}, we now show that $\tmax(k)\geq 2k$.
\begin{lemma}\label{lem:recursive_clique}
	There exists an infinite sequence of graphs $R_1,R_2,\dots$ such that
	$$
		\lim_{i\to\infty} \frac{\chicen(R_i)}{\chilin(R_i)} = 2.
	$$
\end{lemma}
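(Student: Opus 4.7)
The strategy is to build each $R_i$ recursively and then bound $\chicen(R_i)$ and $\chilin(R_i)$ separately, aiming for the former to grow at twice the rate of the latter. I would take $R_1$ to be a small graph with independence number at least $3$ (so that Lemma~\ref{lem:lin_cen_eq} does not a priori force $\chilin = \chicen$) and define $R_i$ from $R_{i-1}$ by adjoining a constant-sized ``gadget'' whose vertices are all apex to $V(R_{i-1})$. The gadget should be chosen so that each level contributes roughly $2$ to $\chicen$ but only $1$ to $\chilin$.

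For the centered-coloring lower bound, I would iterate Lemma~\ref{lem:apex_order}: in any minimum-depth treedepth decomposition of $R_i$ we may assume every vertex of the top gadget sits above every vertex of $R_{i-1}$. A secondary argument---no two gadget vertices can be siblings, since their common lower neighborhood $V(R_{i-1})$ cannot split between two disjoint subtrees---forces the gadget vertices onto a chain, contributing the full gadget size to the depth. An induction on $i$ then gives $\chicen(R_i) = 2i + O(1)$.

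For the linear-coloring upper bound, I would exhibit an explicit coloring $\psi_i$ using one fresh color per level, plus a bounded number of ``guard'' colors carried from $R_1$, with the two non-adjacent top gadget vertices sharing the fresh color. Verifying that $\psi_i$ is linear would proceed by induction on the highest level index met by a given path $P$: if $P$ stays within $R_{i-1}$, the inductive hypothesis applies, and otherwise $P$ either hits a uniquely colored guard vertex from $R_1$ or, by the gadget's internal structure, contains some level whose fresh color appears exactly once.

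The hard part is designing the gadget (and its cross-level interactions) so that this last verification actually goes through. The obvious danger is that the induced subgraph on the color-sharing apex vertices across all levels forms a cocktail-party graph, which is a cograph and hence by Lemma~\ref{lem:lin_cen_eq} has $\chilin = \chicen$, wiping out the factor-of-two saving. Breaking the cograph structure---perhaps by removing a few cross-level edges or introducing an induced $P_4$ via a carefully placed third gadget vertex---is where the real work lies. Once the construction survives this check, taking the ratio $\chicen(R_i)/\chilin(R_i) = (2i + O(1))/(i + O(1)) \to 2$ completes the proof.
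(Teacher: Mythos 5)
There is a genuine gap: the entire content of this lemma is the construction, and you have not actually produced one --- you have sketched a ``tower'' scheme (a constant-size gadget per level, every gadget vertex apex to everything below, the two same-level gadget vertices sharing one fresh color) and then correctly observed that its natural instantiation fails. It fails even faster than you suggest: with just two levels, the path $g_1^{(i)}, g_1^{(i-1)}, g_2^{(i)}, g_2^{(i-1)}$ exists (cross-level pairs are adjacent, same-level pairs are not) and every color on it appears exactly twice, so the proposed coloring is not linear at level two, long before any asymptotic cocktail-party argument is needed. Worse, the two halves of your plan are in direct tension: the apexness of every gadget vertex to all of $R_{i-1}$ is exactly what you use (via Lemma~\ref{lem:apex_order} and the no-siblings argument) to force the gadget vertices onto a chain and get $\chicen \geq 2i$, but that same apexness is what creates the $K_{i\times 2}$ structure with independence number~2 on the gadget vertices, forcing $\chilin \geq 2i-1$ by Lemma~\ref{lem:lin_cen_eq}. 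Deleting cross-level edges or inserting a $P_4$, as you suggest, would weaken the very hypothesis your treedepth lower bound rests on, and you give no indication of how to thread that needle. A proposal whose load-bearing step is flagged as ``where the real work lies'' does not prove the lemma.

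For comparison, the paper escapes this trap by branching rather than stacking: $R_i$ consists of a clique on $v_1,\dots,v_i$ together with $i$ copies $H_1,\dots,H_i$ of $R_{\lfloor (i-1)/2\rfloor}$, where each copy $H_j$ sees only the \emph{single} apex $v_j$. Because each $v_j$ is a cut vertex, a path can meet at most two of the copies, and the linear coloring reuses the $i$ clique colors inside the copies via shifted windows of length $\lfloor (i-1)/2\rfloor$, so that no two windows together cover every color twice; this gives $\chilin(R_i)=i$ exactly. Meanwhile Lemma~\ref{lem:apex_order} forces some $H_j$ entirely below the clique in a minimum treedepth decomposition, yielding the recursion $\chicen(R_i)=i+\chicen(R_{\lfloor (i-1)/2\rfloor})$, whose solution tends to $2i$. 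The halving in the recursion is precisely where the factor of $2$ comes from --- it is not obtained by a per-level additive gain of $2$ versus $1$ as in your plan.
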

\begin{proof}
	Define $R_{i}$ recursively such that $R_0$ is the empty graph and $R_{i}$ is a complete graph on vertices $v_1,\dots, v_{i}$ along with $i$ copies of $R_p$ for $p=\lfloor \frac{i-1}{2}\rfloor$, call them $H_1,\dots, H_i$, such that $v_j$ is an apex with respect to $H_j$ (Figure 1).
	We prove that $\chilin(R_i) = i$ and $\lim_{i\to\infty} \chicen(R_i) = 2i$.

		With respect to the linear coloring number, note that $\chilin(R_i)\geq i$ since the clique of size $i$ requires $i$ colors by Lemma~\ref{lem:lin_cen_eq}.
	We prove the upper bound $\chilin(R_i)\leq i$ by induction on $i$.
	The case of $i=1$ is trivial; assume it is true for $1,\dots, i-1$.
	From the inductive hypothesis, we can assume each $H_j$ only requires $p$ colors for a linear coloring.
	Consider the coloring $\psi$ of $R_i$ such that $\psi(v_j) = j$ and $\psi|_{H_j}$ is a linear coloring of $H_j$ using colors $\{1+(j+1)\bmod i, 1+(j+2)\bmod i,\dots, 1+(j+p)\bmod i\}$.
	If $\psi$ is not a linear coloring, there is some path $Q$ without a center.
	Since $\psi(v_j)\notin \psi|_{H_j}$, $Q$ must contain vertices from at least two $H_j$s; each $v_j$ is a cut vertex, so $Q$ cannot contain vertices from more than two $H_j$s.
	However, $\psi^{-1}(1)\subseteq \{v_{1}\} \cup V(H_{2}) \cup \dots \cup V(H_{p+1})$, but $\{2,\dots, p+1\}\notin \psi|_{H_1}$, which means $Q\cap H_1 = \emptyset$.
	Based on the symmetry of $\psi$ we can apply the same argument to the remaining colors, which means that no such non-centered path $Q$ exists and $\psi$ is indeed a linear coloring of size $i$.

		With respect to the centered coloring number, by Lemma~\ref{lem:apex_order} there is an minimum-depth treedepth decomposition in which $v_j$ is an ancestor of $H_j$.
	This implies there is a $j$ such that no vertex in $H_j$ shares a color in the canonical coloring with any of the vertices in the clique.
	Thus $\chicen(R_i) = i + \chicen(R_p)$; in the limit this recursion approaches $2i$.
\hfill\qed\end{proof}

The graphs in Lemma~\ref{lem:recursive_clique} contain large cliques.
We now show that this is not a necessary condition for the linear and centered coloring numbers to diverge.
\begin{lemma}\label{lem:binary_tree_lc}
	Let $B_\ell$ be the complete binary tree with $\ell$ levels.
	Then
	$$
		\lim_{\ell\to\infty} \frac{\chicen(B_\ell)}{\chilin(B_\ell)} \geq \log_2 3.
	$$
\end{lemma}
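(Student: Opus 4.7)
The plan is to bound $\chicen$ and $\chilin$ separately. For the denominator $\chicen(B_\ell)=\ell$, I would proceed by induction on $\ell$: in any centered coloring of $B_\ell$ the whole tree has some center $v$, and at least one of the two $B_{\ell-1}$ subtrees rooted at $B_\ell$'s children avoids $v$ entirely, so removing $v$ leaves a component containing an intact copy of $B_{\ell-1}$. The restricted coloring on that component is itself a centered coloring using one fewer color, giving $\chicen(B_\ell)\geq 1+\chicen(B_{\ell-1})$; the canonical level coloring matches, so $\chicen(B_\ell)=\ell$.

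For the numerator, I would first establish the subadditivity bound $\chilin(B_{a+b})\leq \chilin(B_a)+\chilin(B_b)$. View $B_{a+b}$ as a top $B_a$ with $2^a$ bottom copies of $B_b$ hanging below its leaves, and apply disjoint optimal linear colorings on the top piece and on each bottom copy. A path lying entirely inside one piece inherits a center from that piece; a path crossing between two bottom copies necessarily traverses a subpath inside the top $B_a$, which has a center whose color lies in the top palette and is therefore absent from every bottom copy, so that vertex is unique on the full path. Fekete's lemma then yields $\chilin(B_\ell)/\ell\to \inf_\ell \chilin(B_\ell)/\ell$, so this limit exists and equals some $c$.

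It then suffices to exhibit a single $\ell_0$ with $\chilin(B_{\ell_0})/\ell_0\leq \log_3 2$, since this forces $c\leq \log_3 2$ and hence $\chicen(B_\ell)/\chilin(B_\ell)\to 1/c\geq \log_2 3$. Mimicking the cyclic-shift construction of Lemma~\ref{lem:recursive_clique}, I would split $B_{\ell_0}$ into a short top piece and several isomorphic bottom copies of a shallower complete binary tree, color the bottom copies by cyclic shifts of a shared palette, and choose the shifts so that for every pair of bottom copies the top-subpath joining their attachment points contains some color absent from both copies' shifted palettes.

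The main obstacle is the shift design in this last step: unlike the clique-apex setup of Lemma~\ref{lem:recursive_clique}, where each subtree hangs off a single vertex, here the bottom copies hang from distinct leaves of a binary top piece, and the center of the top-subpath between two bottom copies depends on which pair is joined, so the shifts must simultaneously avoid palette conflicts across all pairs. Verifying the resulting coloring is linear involves a case analysis on where each cross path enters and exits the top, and the constant $\log_2 3$ emerges from the counting of how many such pairs can be handled per unit of palette growth.
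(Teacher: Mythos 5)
Your outer framework is sound and in places cleaner than the paper's: $\chicen(B_\ell)=\ell$ is correct, the subadditivity $\chilin(B_{a+b})\leq\chilin(B_a)+\chilin(B_b)$ is correct (the intersection of any path with the ancestor-closed top copy of $B_a$ is a single subpath, so a fresh top palette supplies a center for every crossing path), and Fekete reduces the lemma to showing $\inf_\ell \chilin(B_\ell)/\ell \leq \log_3 2$. The paper instead tiles a fixed pattern into $d$ ``stripes'' with parity-alternating local colors, which is essentially a hand-rolled version of your subadditivity step. However, two things are wrong or missing. First, your stated goal --- a single $\ell_0$ with $\chilin(B_{\ell_0})/\ell_0\leq\log_3 2$ --- is unachievable: Theorem~\ref{thm:tree_bound} gives $\ell=\chicen(B_\ell)\leq(\log_2 3)\,\chilin(B_\ell)$, hence $\chilin(B_\ell)/\ell>\log_3 2$ strictly for every $\ell\geq 1$. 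Since only the infimum is needed, you must produce a family of colorings with ratio tending to $\log_3 2$, not one tree attaining it.

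Second, and more seriously, the construction that actually produces the constant is absent; you flag it yourself as ``the main obstacle.'' Cyclic shifts of a single shared palette are not the right mechanism: with a palette of size $b$ there are only $b$ distinct shifts, and two bottom copies carrying the same shift leave a crossing path with no candidate center unless the top supplies a fresh color, which drives the ratio back toward $1$. The paper's device is different: index the hanging subtrees by \emph{all subsets} $C\subseteq[b]$ of the palette, listed in nonincreasing order of size; the subtree assigned $C$ has exactly $|C|$ levels and is colored level-by-level with the colors of $C$, while everything outside these subtrees is ``local'' (fresh colors). A leaf-to-leaf path crossing from the subtree of $C$ to that of $C'$ then has a unique color in $C\triangle C'\neq\emptyset$, with no condition imposed on the subpath joining the attachment points --- this is what dissolves the pairwise-conflict obstacle you describe. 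The constant arises because a subset $C$ services a subtree with $2^{|C|-1}$ leaves, so $b$ colors cover $\sum_{C\subseteq[b]}2^{|C|-1}=3^b/2$ leaves, i.e.\ about $a$ levels of binary tree with $b\approx a\log_3 2$ colors. Without this (or an equivalent) pattern, the proof does not go through.
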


\begin{proof}
Fix an integer $a \geq 1$ and let $b$ be the smallest integer such that
\begin{equation}\label{eq:bt:1}
2^a < 3^b.
\end{equation}
Our proof proceeds by first constructing a \emph{coloring pattern} $\Psi_a$ of $B_a$ and then using $\Psi_a$ to create a linear coloring for an arbitrarily large complete binary tree.
Some vertices of $B_a$ will be left uncolored (we will call them \emph{local}), while some vertices will be colored with one of
the $b$ colors $[b]$ (we will call these colors \emph{global}).
Let $C_1,C_2,\ldots,C_{2^b}$ be the sequence of all subsets of $[b]$ in order of nonincreasing size (in particular, $C_1 = [b]$ and $C_{2^b} = \emptyset$)
  and let $\ell$ be such that $\sum_{i=1}^\ell 2^{|C_i|-1} = 2^{a-1}$. Note that such an index $\ell$ exists due to Equation~\eqref{eq:bt:1}:
 $$\sum_{i=1}^{2^b} 2^{|C_i|-1} = \frac{1}{2} \cdot 3^b > 2^{a-1}$$
 and the fact that the sets $C_i$ are ordered in the nonincreasing order of their sizes. Furthermore, we have $\ell < 2^b$.

Let $v_1,v_2,\ldots,v_{2^{a-1}}$ be an ordering of the leaves of $B_a$ corresponding to an in-order traversal.
Consider an index $1 \leq i \leq \ell$. By construction, there exists a vertex $v_i\in B_a$ at level $|C_i|$ that is the root of a subtree $T_{v_i}$ whose leaves are exactly $v_j$ for $\sum_{i'=1}^{i-1} 2^{|C_{i'}|-1} < j \leq \sum_{i'=1}^i 2^{|C_{i'}|-1}$.
We color the vertices of $T_{v_i}$ level by level with (global) colors of $C_i$; that is, we order the colors of $C_i$ arbitrarily
and color level $k$ of $T_{v_i}$ with the $k$-th color of $C_i$ for every $k \in [|C_i|]$.
All remaining vertices of $B_a$ (that is, those that lie in none of the subtrees $T_{v_i}$ for $1 \leq i \leq \ell$) remain local.

The following claim summarizes the properties of the above coloring.
\begin{myclaim}\label{cl:bt:pattern}
For every path $P$ in $B_a$
that either
\begin{itemize}
\item has both endpoints in a leaf or the root of the tree $B_a$, or
\item does not contain a local vertex,
\end{itemize}
there exists a global color $c \in [b]$ such $c$ appears uniquely on $P$.
\end{myclaim}
\begin{proof}
If a path $P$ does not contain a local vertex, then it is contained in a single tree $T_{v_i}$.
For such a path, the unique vertex on $P$ of maximum level
  is colored with a global color that appears uniquely on $P$.
Similarly, if $P$ is a leaf path in $B_a$, then any globally colored vertex
of the tree $T_{v_i}$ containing the leaf endpoint of $P$ satisfies the desired property.
Otherwise, a path $P$ that has both endpoints in leaves of $B_a$ but contains a local vertex needs
to start in a leaf of one subtree $T_{v_i}$ and end in a leaf of a different subtree $T_{v_i'}$.
Then, observe that any (global) color of $C_i \triangle C_{i'}$ appears exactly once on $P$.
\cqedsymbol\end{proof}

Let $p < 2^a$ be the number of local vertices in the pattern $\Psi_a$.
For an even integer $d \geq 2$, consider a coloring $\psi$ of $B_{ad}$ defined as follows.
Fix a palette $[db]$ of \emph{global} colors and a palette $[2p]$ of \emph{local} colors.
For every $1 \leq i \leq d$, the $i$-th \emph{stripe} consists
of $a$ levels $(i-1)a+1,\ldots,ia$.
In $B_{ad}$, such a stripe consists of $2^{(d-i)a}$ copies of $B_a$.
Color every such copy using the pattern $\Psi_a$
with global colors $(i-1)b+1, \ldots, ib$
as the $b$ global colors of $\Psi_a$ and color each local vertex with a different local
color from the set $\{1,2,\ldots,p\}$ if $i$ is odd and from the set $\{p+1,p+2,\ldots,2p\}$
if $i$ is even.

We claim that the above is a linear coloring of $B_{ad}$ with $db+2p < db+2^{a+1}$ colors.
Consider a path $P$ in $B_{ad}$ and let $i$ be the index of the highest stripe
intersected by $P$.
By the choice of $i$, $P$ intersects exactly one of the copies of $B_a$ in the $i$-th stripe.
If $P$ contains a leaf-to-leaf path in this copy, then Claim~\ref{cl:bt:pattern} asserts
that $P$ contains a center in this copy (recall that every stripe uses a different
    set of $b$ global colors).
Otherwise, $P$ intersects at most one copy of $B_a$ in every stripe.
If $P$ intersects at least three stripes, then $P$ contains a root-to-leaf path
in the single copy of $B_a$ intersected by $P$ at stripe $(i-1)$, and we are
again done by Claim~\ref{cl:bt:pattern}.
Similarly, Claim~\ref{cl:bt:pattern} finishes the proof if $P$ does not contain a local
vertex at the $i$-th stripe.
Finally, in the remaining case $P$ intersects at most two stripes (the $i$-th one
and possibly the $(i-1)$-th one) and contains a local vertex in the $i$-th stripe.
Since we used different set of local colors for odd and even stripes, any such local
vertex in $i$-th stripe is a center of $P$.

Consequently, we have exhibited a linear coloring of $B_{ad}$ with less than $db+2^{a+1}$
colors, where $b$ is defined as in Equation~\eqref{eq:bt:1}. If we let $d$ go to $\infty$, then
the ratio $(ad)/(db+2^{a+1})$ approaches $a/b$. This ratio, in turn, approaches
$\log_2(3)$ as $a \to \infty$ due to the choice of $b$ at Equation~\eqref{eq:bt:1}.
This finishes the proof of the lemma.
\qed\end{proof}

\begin{figure}
	\centering
		\resizebox{0.5\linewidth}{!}{\begin{tikzpicture}
	\node[vertex,fill=blue]   (c01) at (120:2) {};
	\node[vertex,fill=red]    (c02) at ( 60:2) {};
	\node[vertex,fill=black]  (c03) at (  0:2) {};
	\node[vertex,fill=green]  (c04) at (300:2) {};
	\node[vertex,fill=yellow] (c05) at (240:2) {};
	\node[vertex,fill=white]  (c06) at (180:2) {};

	\node[vertex,{fill=red}]    (c11) at (130:4) {};
	\node[vertex,{fill=black}]  (c12) at (110:4) {};

	\node[vertex,{fill=black}]  (c21) at (70:4) {};
	\node[vertex,{fill=green}]  (c22) at (50:4) {};

	\node[vertex,{fill=green}]  (c31) at ( 10:4) {};
	\node[vertex,{fill=yellow}] (c32) at (350:4) {};

	\node[vertex,{fill=yellow}] (c41) at (310:4) {};
	\node[vertex,{fill=white}]  (c42) at (290:4) {};

	\node[vertex,{fill=white}]  (c51) at (250:4) {};
	\node[vertex,{fill=blue}]   (c52) at (230:4) {};

	\node[vertex,{fill=blue}]   (c61) at (190:4) {};
	\node[vertex,{fill=red}]    (c62) at (170:4) {};

	\foreach \from/\to in	{c01/c02,
							 c02/c03,
							 c03/c04,
							 c04/c05,
							 c05/c06,
							 c11/c12,
							 c21/c22,
							 c31/c32,
							 c41/c42,
							 c51/c52,
							 c61/c62}
	\draw[thick,black!60] (\from) -- (\to);

	\foreach \from/\to in	{c01/c11,
							 c01/c12,
							 c03/c31,
							 c03/c32,
							 c01/c03,
							 c01/c05,
							 c02/c04,
							 c02/c06,
							 c03/c05,
							 c04/c06}
	\path[thick,black!60] (\from) edge (\to);

	\foreach \from/\to in	{c01/c04,
							 c02/c05,
							 c03/c06,
							 c01/c06,
							 c02/c21,
							 c02/c22,
							 c04/c41,
							 c04/c42}
	\path[thick,black!60] (\from) edge (\to);

	\foreach \from/\to in	{c05/c51,c05/c52,c06/c61,c06/c62}
	\draw[thick,black!60] (\from) -- (\to);

\end{tikzpicture}}
	\caption{Linear colorings of graph $R_6$ in Lemma~\ref{lem:recursive_clique}}\label{fig:extremal_lc}
\end{figure}
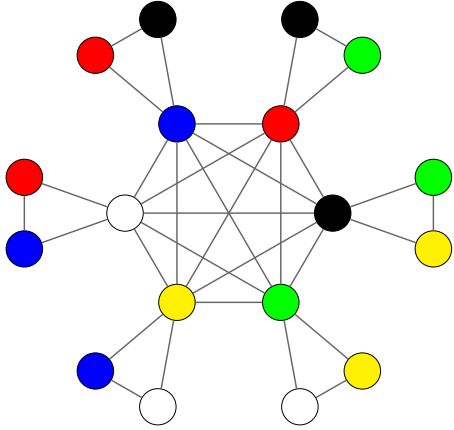
In Section~\ref{sec:trees} we show that the bound in Lemma~\ref{lem:binary_tree_lc} is tight for binary trees (Theorem~\ref{thm:tree_bound}).
We conjecture that the construction in Lemma~\ref{lem:recursive_clique} is also tight for general graphs.
\begin{conjecture}\label{conj:two_chilin}
	For any graph $G$, $\chicen(G)\leq 2\chilin(G)$.
\end{conjecture}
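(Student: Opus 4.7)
The natural approach is strong induction on $k = \chilin(G)$, with the trivial base case $k \leq 1$ (forcing $G$ to be edgeless, since linear colorings are proper). For the inductive step, I would try to identify a small set $S \subseteq V(G)$ of at most two vertices to place at the top of a treedepth decomposition, arranged so that every component $C$ of $G \setminus S$ satisfies $\chilin(C) \leq k - 1$. Combining this with the inductive hypothesis would give $\td(G) \leq |S| + \max_{C} \td(C) \leq 2 + 2(k-1) = 2k$, which is equivalent to the desired centered coloring bound.

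To locate such an $S$, I would fix an optimal linear coloring $\psi$ and examine witnesses to $\psi$ not being a centered coloring---connected induced subgraphs in which every $\psi$-color appears at least twice. If some color class $\psi^{-1}(c)$ intersects each such witness in a vertex whose removal disconnects the witness, one could hope to extract a constant-size separator from these intersections. Lemma~\ref{lem:apex_order} is the key tool here: any apex-like subset of $S$ can be pushed to the top of an optimal treedepth decomposition without cost, which reduces the problem to finding an apex substructure of size at most two inside or near a minimal witness.

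The main obstacle is that the tight family $R_i$ from Lemma~\ref{lem:recursive_clique} already shows this naive statement is too weak: no separator of size $O(1)$ reduces $\chilin(R_i)$ by just one, because removing any two vertices leaves a graph still containing a large clique with its full apex structure. Closing the induction for $R_i$ requires absorbing the entire top clique at once, after which the residual components have $\chilin$ at most $\lfloor (i-1)/2 \rfloor$, and the recurrence $\td(R_i) = i + \td(R_{\lfloor (i-1)/2 \rfloor})$ telescopes to $2i$. A successful proof therefore seems to require a strengthened inductive statement of the form: in any connected $G$ with $\chilin(G) = k$, there exists a clique $K \subseteq V(G)$ of size $s$ such that every component of $G \setminus K$ has linear coloring number at most $\lceil (k-s)/2 \rceil$ (or a similarly balanced parameter trade-off). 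Certifying the existence of such an apex clique in an arbitrary graph---not merely in the handcrafted $R_i$---is where we expect the principal difficulty to lie, and it is likely to require structural insight into extremal linear colorings that goes beyond the grid-minor machinery used for the polynomial bound earlier in the paper.
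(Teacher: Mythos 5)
You should first be aware that the statement you set out to prove is presented in the paper as Conjecture~\ref{conj:two_chilin} and is \emph{open}: the paper does not prove it. The only general upper bound established is the polynomial one of Theorem~\ref{thm:grid_bound}, obtained via the Kawarabayashi--Rossman characterization of large treedepth, the grid minor theorem, and the pseudogrid argument of Lemma~\ref{lem:grid}, and the conclusion explicitly states that the current toolkit must be expanded to prove or disprove the factor-$2$ bound. So there is no proof in the paper to compare yours against; any complete argument along your lines would be a new result, and your text must be judged on its own terms.

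On those terms there is a genuine gap, which to your credit you partially identify: the entire inductive content rests on the unproven existence, in an arbitrary connected $G$ with $\chilin(G)=k$, of a clique or small separator $K$ whose removal leaves components of suitably smaller linear coloring number. Nothing certifies this; Lemma~\ref{lem:apex_order} only reorders a treedepth decomposition around an apex structure you have already located, it does not produce one, and the weak two-vertex version of the step is itself unsupported --- there is no reason a graph with $\chilin(G)=k$ must contain two vertices whose deletion drops the linear coloring number of every component (already for complete binary trees, where $\chilin(B_\ell)$ grows only like $\ell/\log_2 3$ by Lemma~\ref{lem:binary_tree_lc} and Theorem~\ref{thm:tree_bound}, consecutive levels often share the same linear coloring number, so a bounded-size deletion need not make progress). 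Moreover the quantitative form of your strengthened statement is wrong: for $\td(G)\le |K|+\max_C \td(C)$ to telescope to $2k$ using $\td(C)\le 2\chilin(C)$ you need every component $C$ of $G\setminus K$ to satisfy $\chilin(C)\le k-|K|/2$, not $\lceil (k-|K|)/2\rceil$. On $R_i$ itself, where $|K|=k$, your condition demands components of linear coloring number $0$, whereas the $H_j$ have $\chilin(H_j)=\lfloor (k-1)/2\rfloor$; the correct threshold $k-|K|/2=k/2$ is what the construction actually meets. In short, what you have written is a reasonable diagnosis of why the conjecture is hard, together with a miscalibrated candidate lemma, but it is not a proof, and the missing existence statement is essentially the whole theorem.
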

While the exclusion of a path of length $2^k$ indicates $\tmax(k) \leq 2^k$, this nonetheless leaves a large gap between the upper and lower bounds on $\tmax(k)$.
To move towards a proof of Conjecture~\ref{conj:two_chilin}, we establish a polynomial upper bound on $\tmax(k)$ in general graphs in the next section (Theorem~\ref{thm:grid_bound}).
Because this proof uses ``heavy machinery'', we consider two restricted graph classes\textemdash namely, trees and interval graphs\textemdash in Sections~\ref{sec:trees} and~\ref{sec:interval} and give tighter upper bounds on $\tmax(k)$ for graphs in these classes.

	\section{Treedepth Upper Bounds on General Graphs}
This section is devoted to proving a polynomial upper bound on $\tmax(k)$.
\begin{theorem}\label{thm:grid_bound}
	There exists a polynomial $p$ such that every graph $G$ satsifies
  $\chicen \leq \chilin^{190} p(\log \chilin)$.
\end{theorem}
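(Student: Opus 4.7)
My plan is to implement the ``grid minors'' strategy indicated in the preceding discussion: use a polynomial excluded-minor theorem for treedepth to show that any graph of large treedepth contains a structured obstruction, and then lower bound $\chilin$ on each such obstruction. The starting observation is that, as noted just after Definition~\ref{def:linear_coloring}, any graph with linear coloring number $k$ excludes the path $P_{2^k}$ as a subgraph. Combining this with a Kawarabayashi--Rossman-type polynomial excluded-minor theorem for treedepth should yield that if $\td(G) \geq q(h)$ for some polynomial $q$, then $G$ contains either a subdivision of the complete binary tree $B_h$ as a topological minor or a large grid as a (shallow) minor---the remaining long-path obstruction being already precluded by the hypothesis on $\chilin(G)$.

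The crux is then to bound $\chilin$ from below on each forced substructure, using the fact that $\chilin$ is subgraph-monotone. For a large grid, I would invoke the polynomial grid-minor theorem of Chekuri--Chuzhoy (with the later improvements by Chuzhoy--Tan) to relate treewidth to the size of a grid minor, and then observe that large grids admit no small linear coloring since they contain many vertex-disjoint long paths whose internal colors must be jointly coordinated. For a subdivision of $B_h$, I would adapt the inductive counting argument behind Theorem~\ref{thm:tree_bound}: every leaf-to-ancestor-to-leaf walk in $B_h$ lifts to an actual path in the subdivision, and the unique-color requirement of a linear coloring forces the number of distinct color classes along any root-to-leaf path to grow linearly in $h$, in the spirit of the $\log_2 3$ ratio appearing in Lemma~\ref{lem:binary_tree_lc}.

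Putting the pieces together gives $\td(G) \leq q(O(\chilin(G))) \cdot \mathrm{polylog}(\chilin(G))$, with the exponent $190$ arising from multiplying the exponents in (i) the polynomial grid-minor theorem, (ii) the conversion between treewidth and the treedepth obstruction, and (iii) the reduction from the forced substructure back to a lower bound on $\chilin$. The main obstacle I anticipate is controlling $\chilin$ on \emph{subdivisions} of the binary tree rather than on $B_h$ itself: a long subdivided edge can by itself absorb several colors in clever ways, so the argument must robustly separate the ``tree-branching'' contribution to $\chilin$ from the ``long path'' contribution. This is where I would expect the bulk of the heavy lifting, and where the precise large exponent is pinned down.
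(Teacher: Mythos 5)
Your high-level architecture matches the paper's proof exactly: apply the Kawarabayashi--Rossman polynomial excluded-minor theorem for treedepth, dispose of the long-path outcome via the $\log_2(d+1)$ lower bound, handle the binary-tree outcome via Theorem~\ref{thm:tree_bound}, and handle the large-treewidth outcome via Chuzhoy's polynomial grid-minor theorem; the exponent $190 = 19\cdot 5\cdot 2$ arises exactly as you describe. However, there is a genuine gap in the step you treat most casually: lower-bounding $\chilin$ on a large grid. Your stated reason---that grids ``contain many vertex-disjoint long paths whose internal colors must be jointly coordinated''---does not yield a proof, and the intuition behind it is misleading: a disjoint union of arbitrarily many paths of length $d$ still has linear coloring number only $O(\log d)$, so vertex-disjoint long paths by themselves force nothing. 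What the grid actually provides is \emph{routability}: given any small, well-separated set of vertices, the grid contains a single simple path visiting all of them. The paper's Lemma~\ref{lem:grid} exploits this via an induction on a relaxed structure (a $k$-pseudogrid, which is closed under deleting rows and columns): either some color is ``infrequent,'' i.e., confined to $O(\mu(\mathcal{P})+\mu(\mathcal{Q}))$ rows plus marginal vertices, in which case those rows are deleted and the color count drops, or every color appears at two well-separated non-marginal coordinates, in which case a snake-like traversal of the even-indexed rows with local detours produces a single simple path visiting two occurrences of every color---a non-centered path, contradiction. Crucially this yields only $\chilin(G) = \Omega(\sqrt{k})$ for a $k$-pseudogrid (hence the extra factor of $2$ in the exponent $190$), a quadratic loss your sketch does not anticipate. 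Without an argument of this kind, the large-treewidth branch of your proof does not close.

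A second, more minor point: you identify subdivisions of the complete binary tree as ``the bulk of the heavy lifting,'' but in fact this case comes for free. Theorem~\ref{thm:tree_bound} applies to \emph{every} tree of maximum degree $3$, including arbitrary subdivisions $T$ of $B_h$; since $B_h$ is a minor of $T$ and treedepth is minor-monotone, $\chicen(T) = \td(T) \geq \td(B_h) \geq h$, and the theorem immediately gives $\chilin(G) \geq \chilin(T) \geq \chicen(T)/\log_2 3 \geq h/\log_2 3$. No separate analysis of long subdivided edges absorbing colors is needed. The effort you budget there should instead be redirected to the pseudogrid lemma, which is where the real work lies.
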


Our starting point is the following theorem of Kawarabayashi and Rossman~\cite{KawarabayashiR18}:
\begin{theorem}[\cite{KawarabayashiR18}]
There is an absolute constant $C$ such that every graph $G$
of treedepth at least $C k^5 \log^2 k$ satisfies at least one of the following:
\begin{enumerate}
\item the treewidth of $G$ is at least $k$;
\item $G$ contains a complete binary tree of height $k$ as a minor;
\item $G$ contains a path on $2^k$ vertices.
\end{enumerate}
\end{theorem}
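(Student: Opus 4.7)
Since the canonical equivalence between centered colorings and treedepth decompositions (recalled just before Definition~\ref{def:p_linear}) gives $\chicen(G) = \td(G)$, the theorem reduces to showing $\td(G) \leq k^{190} p(\log k)$ where $k = \chilin(G)$. The plan is to apply the Kawarabayashi--Rossman theorem contrapositively: choose an intermediate parameter $m$ polynomial in $k$ and suppose $\td(G) \geq C m^5 \log^2 m$, so that the theorem guarantees one of its three structural witnesses at parameter $m$. The goal in each case is to deduce $\chilin(G) > k$, contradicting $\chilin(G) = k$ and therefore bounding $\td(G)$ as claimed. I will use repeatedly that $\chilin$ is subgraph-monotone: if $H \subseteq G$ then the restriction of any linear coloring of $G$ to $V(H)$ colors every path of $H$ (which is also a path of $G$) with a center, giving $\chilin(H) \leq \chilin(G)$.

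Case 3 is the easiest: a path on $2^m$ vertices has $\chilin \geq m$ by the bound recorded after Definition~\ref{def:linear_coloring}, so $\chilin(G) \geq m$ whenever $m > k$. For Case 2, because $B_m$ has maximum degree $3$, minor containment coincides with topological minor containment, so $G$ contains a subdivision $\widetilde B_m$ of $B_m$ as a subgraph. I would split on subdivision length: if some branch of $\widetilde B_m$ has length exceeding $2^k$, the resulting long path already forces $\chilin(G) > k$ via the Case 3 argument; otherwise the subdivision is ``short'' everywhere, and an adaptation of the pattern construction in Lemma~\ref{lem:binary_tree_lc} (now read as a lower bound) shows $\chilin(\widetilde B_m) = \Omega(m)$. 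Subgraph-monotonicity then transports the bound to $G$.

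Case 1 is the most delicate and is the source of the large exponent. Here $\operatorname{tw}(G) \geq m$, and invoking the polynomial excluded-grid theorem of Chuzhoy and Tan yields an $s \times s$ grid minor with $s = m^{\Omega(1)}$. Passing to a grid subdivision inside $G$, I would again execute a length dichotomy: long subdivided edges produce a Hamiltonian-style path in the grid subdivision of length $\Omega(s^2)$ and reduce to Case 3, while a uniformly short subdivision of the $s \times s$ grid preserves enough of its $\td = \Theta(s)$ structure to force $\chilin = s^{\Omega(1)}$, transferred to $\chilin(G)$ via subgraph-monotonicity. Composing the $m^5$ loss from Kawarabayashi--Rossman with the polynomial loss in Chuzhoy--Tan and with the grid-subdivision-to-$\chilin$ conversion accounts for the exponent $190$ and the polylogarithmic factor $p(\log k)$. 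The principal obstacle is precisely this last conversion: nothing in the preceding sections directly supplies a polynomial lower bound on $\chilin$ of a grid (or of a short subdivision thereof), so this step requires a dedicated combinatorial argument. I expect it to exploit the grid's many vertex-disjoint internal paths to manufacture a non-centered path under any candidate coloring of size $o(s^{\Omega(1)})$, and managing this without losing further polynomial factors is what dominates the technical effort.
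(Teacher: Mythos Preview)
You have misidentified the target. The statement in question is the Kawarabayashi--Rossman theorem, which the paper quotes from~\cite{KawarabayashiR18} and does \emph{not} prove; it is invoked as a black box in the proof of Theorem~\ref{thm:grid_bound}. Your proposal is not a proof of the stated theorem at all --- it is a proof sketch of Theorem~\ref{thm:grid_bound}, and it explicitly \emph{uses} the Kawarabayashi--Rossman theorem as a tool. So as a proof of the given statement, it is simply off-target.

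If I instead read your proposal as an attempt at Theorem~\ref{thm:grid_bound}, the high-level skeleton matches the paper's: apply Kawarabayashi--Rossman, dispatch the long-path and binary-tree-minor cases directly, and handle the large-treewidth case via Chuzhoy's grid minor theorem plus a dedicated lemma converting a grid-like substructure into a lower bound on $\chilin$. Two divergences are worth noting. First, for the binary-tree case the paper does not ``read Lemma~\ref{lem:binary_tree_lc} as a lower bound'' (that lemma constructs a small linear coloring and so bounds $\chilin$ from above); instead it invokes Theorem~\ref{thm:tree_bound}, which gives $\chilin(T) \geq \chicen(T)/\log_2 \Delta$ for any tree $T$, applied to the subdivision of $B_m$. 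Your proposed dichotomy on branch length is unnecessary once you have this. Second, for the treewidth case the paper does not argue via a length dichotomy on a grid subdivision; it passes to the relaxed notion of a $k$-pseudogrid and proves the self-contained Lemma~\ref{lem:grid}, whose inductive ``infrequent color'' argument and explicit serpentine path construction do the work you flagged as ``the principal obstacle.'' Your intuition that this is the crux is correct, but the paper's mechanism is different from the short/long dichotomy you outline.
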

Assume that the treedepth of $G$ is at least $C k^5 \log^2 k$.
If $G$ contains a path on $2^k$ vertices (condition 3), then clearly $\chilin(G) \geq k$.
If $G$ contains a complete binary tree of height $k$ as a minor (condition 2), then $G$
also contains a subdivision of a complete binary tree of height $k$ as a subgraph.
Since $\chilin(H) \leq \chilin(G)$ for any subgraph $H$ of $G$,
Theorem~\ref{thm:tree_bound} asserts that $\chilin(G) \geq k / \log_2(3)$.
Thus, in the proof of Theorem~\ref{thm:grid_bound}, we are left with the
case when $G$ has large treewidth.

Here, we use the celebrated grid minor theorem, with the best known bound due to Chuzhoy~\cite{Chuzhoy16}.
\begin{theorem}[\cite{Chuzhoy16}]
	There is a polynomial $p'$ such that every graph $G$ with treewidth
	at least $k^{19} p'(\log k)$ contains a $k \times k$ grid as a minor.
\end{theorem}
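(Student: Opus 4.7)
The plan is to prove the contrapositive by chaining the two heavy-hammer theorems set up in the excerpt. Abbreviate $\ell := \chilin(G)$. I would suppose $\chicen(G) > \ell^{190} p(\log \ell)$ for a polynomial $p$ chosen large enough to absorb the various $\log^2$ overheads, and aim to derive $\chilin(G) \geq \ell + 1$, which is absurd. Instantiate the Kawarabayashi--Rossman theorem with $k := \ell^{38}$, so that the hypothesis $\td(G) \geq C k^5 \log^2 k$ is met. Two of the three KR outcomes are already handled in the excerpt preceding the theorem: a path on $2^k$ vertices gives $\chilin(G) \geq k = \ell^{38} > \ell$ directly, while a complete binary tree of height $k$ as a minor supplies a subdivision subgraph forcing $\chilin(G) \geq k/\log_2 3 > \ell$ via Theorem~\ref{thm:tree_bound}. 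Both contradict the definition of $\ell$.

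The remaining case is $\operatorname{tw}(G) \geq k = \ell^{38}$. Here I would apply Chuzhoy with $k' := \ell^{2}$; since $(k')^{19} p'(\log k') \leq \ell^{38} \cdot \operatorname{polylog}(\ell)$, which is at most $k$ after $p$ absorbs the extra polylogarithmic factor, this yields a $k' \times k' = \ell^{2} \times \ell^{2}$ grid minor in $G$. The exponent $190 = 5 \cdot 38 = 5 \cdot (2 \cdot 19)$ is calibrated precisely so that the grid minor we can extract from $G$ has side $\ell^{2}$: the remaining task is to show that such a grid minor forces $\chilin(G) \geq \sqrt{k'} = \ell + 1$.

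This final step is where I expect the main obstacle. Since $\chilin$ is monotone under subgraphs but not under minors in general---subdividing an edge can decrease the chromatic number and hence $\chilin$, for example $\chilin(K_4) = 4$ drops after a single edge subdivision---one cannot invoke a bound on $\chilin$ of the abstract grid directly. The natural route is to promote the grid minor to a topological minor, exploiting that the grid has maximum degree $4$, and then prove a polynomial lower bound of the form $\chilin(H) = \Omega(\sqrt{N})$ for every subdivision $H$ of the $N \times N$ grid. The easy bounds available inside a grid are all only logarithmic in $N$: the Hamiltonian path of length $N^2$ yields only $\chilin \geq 2 \log N$, and the deepest complete binary subtree has depth $O(\log N)$, giving merely $\chilin \geq O(\log N)$ through Theorem~\ref{thm:tree_bound}. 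Obtaining the square-root exponent therefore requires a genuinely new combinatorial argument, presumably exploiting the many-paths structure of the grid or the additional slack available inside the branch sets of the minor in $G$. I expect this to be the crux of the proof.
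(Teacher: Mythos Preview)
The statement you were asked to prove is Chuzhoy's grid-minor theorem, which the paper \emph{cites} from~\cite{Chuzhoy16} and does not prove; there is no ``paper's own proof'' to compare against. Your proposal makes no attempt to prove this statement. Instead you have sketched a proof of Theorem~\ref{thm:grid_bound} (the bound $\chicen \leq \chilin^{190} p(\log \chilin)$), in which Chuzhoy's theorem is merely one ingredient. That is a genuine mismatch you should be aware of.

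Setting the confusion aside and reading your outline as an attempt at Theorem~\ref{thm:grid_bound}: the high-level framework matches the paper's exactly. The paper chains Kawarabayashi--Rossman with Chuzhoy, dispatches the long-path and binary-tree-minor cases just as you describe, and reduces everything to showing $\chilin(G) = \Omega(\sqrt{k})$ whenever $G$ contains a $k\times k$ grid minor. You correctly note that $\chilin$ is not minor-monotone and that the fix is to pass to a subgraph via the maximum-degree-$4$ promotion to a topological minor; the paper does the same thing under the name ``$k$-pseudogrid'' (two orthogonal families of $k$ vertex-disjoint paths meeting in a controlled way), which is a mild generalisation of a grid subdivision.

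However, your proposal stops exactly where the paper's real work begins. The entire combinatorial content of the section is Lemma~\ref{lem:grid}, proving the $\Omega(\sqrt{k})$ bound for pseudogrids. The paper does this by induction on the number of colors appearing on the row- and column-path families: a color is called \emph{infrequent} if, after stripping a constant-width margin, it is confined to $O(\mu)$ rows (where $\mu$ is the current color count); deleting those rows and re-inducting drops $\mu$ by one while shrinking $k$ only linearly in $\mu$, yielding $k = O(\mu^2)$. If no color is infrequent, one explicitly builds a single non-centered path: a serpentine traversal of the even rows, locally rerouted near $2\mu$ pre-selected, well-separated vertices so that every color is visited at least twice. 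You flag this step as ``the crux'' but supply no mechanism for it, so what you have is an accurate roadmap with the destination missing.
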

We slightly relax the notion of a $k \times k$ grid minor to a \emph{$k$-pseudogrid}, defined as follows.
\begin{definition}
	A graph $G$ \emph{contains a $k$-pseudogrid} if there exist two sequences of vertex-disjoint
	paths in $G$, $\mathcal{P} = (P_1,P_2,\ldots,P_k)$
	and $\mathcal{Q} = (Q_1,Q_2,\ldots,Q_k)$ such that
	\begin{itemize}
		\item for every $i \in [k]$, the path $P_i$ is a concatenation of paths $P_{i,0}$, $P_{i,1}^Q$, $P_{i,1}$, $P_{i,2}^Q$, $P_{i,2}$, $\ldots$, $P_{i,k}^Q$, $P_{i,k}$ in this order such that each path $P_{i,j}^Q$ for $j \in [k]$ is a subpath of $Q_j$ (possibly consisting of a single vertex) and every
		path $P_{i,j}$, $0 \leq j \leq k$ does not contain any edge nor internal vertex on any path $Q_j$
		(we explicitly allow $P_{i,0}$ and $P_{i,k}$ to be paths of length $0$);
		\item a symmetric condition holds with the roles of $\mathcal{P}$ and $\mathcal{Q}$ swapped.
	\end{itemize}
\end{definition}

In what follows, the paths $P_{i,j}$, $P_{i,j}^Q$, $Q_{i,j}$, and $Q_{i,j}^P$ are considered empty for pairs of indices $(i,j)$ not defined above.

Clearly, if $G$ contains a $k \times k$-grid as a minor, it contains a $k$-pseudogrid:
just let the paths $\mathcal{P}$ follow the rows of the grid and the paths of $\mathcal{Q}$ follow the columns.
To finish the proof of Theorem~\ref{thm:grid_bound}, it suffices to show the following technical
result.
\begin{lemma}\label{lem:grid}
If $G$ contains a $k$-pseudogrid, then $\chilin(G) = \Omega(\sqrt{k})$.
\end{lemma}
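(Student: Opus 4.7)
The plan is to prove the contrapositive by induction on the number of colors $m$ used in a linear coloring of $G$: if $\chilin(G) \leq m$, then $G$ contains no $k$-pseudogrid with $k > C m^2$ for some absolute constant $C$. This is equivalent to $\chilin(G) = \Omega(\sqrt{k})$ when $G$ contains a $k$-pseudogrid. The base case $m = 1$ is immediate, because any graph with $\chilin(G) \leq 1$ is edgeless and so contains no $k$-pseudogrid with $k \geq 1$.

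For the inductive step, fix a linear coloring $\psi$ of $G$ with $m$ colors. First, apply the definition of linear coloring to each row $P_i$ and each column $Q_j$ of the pseudogrid (viewed as a path in $G$) to extract a \emph{center color}\,---\,some color in $[m]$ that appears uniquely on the given path. Second, pigeonhole the $2k$ row- and column-center colors over the palette $[m]$ to obtain a single color $c^\ast$ that is a center color for $\Omega(k/m)$ of the rows and/or $\Omega(k/m)$ of the columns. The goal is to show that $G' := G \setminus \psi^{-1}(c^\ast)$ contains a sub-pseudogrid of size $k - O(\sqrt{k})$. Note that $\psi|_{G'}$ is automatically a linear coloring using only $m-1$ colors, because the center of any path in $G'$ is also its center in $G$, and a vertex of color $c^\ast$ has been removed. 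Applying the inductive hypothesis to $G'$ yields $k - O(\sqrt{k}) \leq C(m-1)^2$, which after rearrangement gives $k \leq C m^2$ for a suitable constant.

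The technical heart of the argument is extracting the $(k - O(\sqrt{k}))$-pseudogrid from $G'$. The idea is to (i) exploit the flexibility to slide each crossing representative vertex $v_{i,j}$ within its subpath $P_{i,j}^Q$ to avoid $c^\ast$-colored vertices whenever one is available, so that only ``unavoidable'' $c^\ast$-occurrences interfere with the pseudogrid structure; and (ii) use a Zarankiewicz-type extremal count on the bipartite incidence structure between $c^\ast$-colored vertices and the $k \times k$ grid of crossings, combined with the Step~2 pigeonhole, to show that at most $O(\sqrt{k})$ rows and $O(\sqrt{k})$ columns are seriously disrupted by the deletion of $c^\ast$. Restricting to the surviving rows and columns produces the desired sub-pseudogrid.

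The main obstacle is step (ii): obtaining the sharp $O(\sqrt{k})$ disruption bound rather than a $\Theta(k/m)$ bound. A naive argument that simply removes one color class loses a constant fraction of the rows or columns per color eliminated and yields only the weaker bound $\chilin(G) = \Omega(\log k)$ via path-length considerations. Reaching $\Omega(\sqrt{k})$ requires the simultaneous use of the row-center and column-center pigeonholes together with an extremal (Kővári--Sós--Turán-style) estimate guaranteeing that any color satisfying the pigeonhole can be removed while preserving all but $O(\sqrt{k})$ of the pseudogrid's rows and columns; making this tight is the delicate combinatorial content of the lemma.
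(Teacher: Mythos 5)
Your overall accounting is sound---an induction that pays $O(\sqrt{k})$ rows and columns per color removed, over $O(\chilin)$ colors, does yield $k = O(\chilin^2)$, and this matches the budget in the paper's induction. But the step that makes the whole argument go through is exactly the one you defer to ``the delicate combinatorial content,'' and the mechanism you sketch for it does not work. The color $c^\ast$ you obtain by pigeonholing the per-row and per-column center colors is only guaranteed to appear \emph{uniquely} on $\Omega(k/m)$ of the rows; nothing prevents it from also appearing, non-uniquely, on every other row and every column, at internal (non-crossing) positions of the segments $P_{i,j}$ where no amount of sliding crossing representatives can avoid it. Deleting $\psi^{-1}(c^\ast)$ can then disconnect essentially all rows and columns. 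There is also no K\H{o}v\'ari--S\'os--Tur\'an-type bound available: the linear-coloring hypothesis forbids \emph{paths on which every color repeats}, but it imposes no forbidden bipartite incidence pattern between a single color class and the grid cells, so the extremal count you invoke has no hypothesis to run on.

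The paper's proof closes this gap with a dichotomy rather than a direct claim about one color. Call a color \emph{infrequent} on $\mathcal{P}$ if all its non-marginal occurrences on $V(\mathcal{P})$ lie on $O(\mu(\mathcal{P})+\mu(\mathcal{Q}))$ of the paths of $\mathcal{P}$ (and symmetrically for $\mathcal{Q}$). If some color is infrequent, those few paths (plus a constant-width margin) are deleted, the pseudogrid shrinks by $O(\mu(\mathcal{P})+\mu(\mathcal{Q}))$, one color disappears from $\mathcal{P}$, and induction applies---this is the analogue of your step, but only claimed when it is actually available. The real content is the other branch: if \emph{no} color is infrequent, one can greedily select two well-separated, non-marginal occurrences of every color appearing on $V(\mathcal{P})\cup V(\mathcal{Q})$ and route a single simple ``snake'' path through even-numbered rows (with local detours) that visits all selected vertices; every color on that path appears at least twice, so the path has no center, contradicting linearity. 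Your proposal has no analogue of this second branch, and it is precisely there that the linear-coloring hypothesis is used beyond the per-path center colors. Without it, the claim that a cheaply removable color always exists is unsupported.
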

\begin{proof}
Fix a linear coloring $\psi$ of $G$.
Let $(\mathcal{P}, \mathcal{Q})$ be a $k$-pseudogrid in $G$.
Let $V(\mathcal{P}) = \bigcup_{P \in \mathcal{P}} V(P)$ and similarly define $V(\mathcal{Q})$.
Let $\mu(\mathcal{P})$ be the number of distinct colors $\psi$ uses on $V(\mathcal{P})$ and similarly define $\mu(\mathcal{Q})$.
To prove the lemma, it suffices to show for any $k$-pseudogrid $(\mathcal{P},\mathcal{Q})$ in $G$,
   $k \leq 100 \cdot (\mu(\mathcal{P}) + \mu(\mathcal{Q}))^2$.
We shall prove it by induction over $k$.

The statement is trivial for $k \leq 100$.
For an inductive step, we proceed as follows.
For a vertex $v \in V(\mathcal{P}) \cup V(\mathcal{Q})$, the \emph{grid coordinate} of $v$ is $(i,j)$ if $v \in V(P_{i,j}^Q) \cup (V(P_{i,j}) \setminus V(P_{i,j+1}^Q)) \cup V(Q_{i,j}^P) \cup (V(Q_{i,j}) \setminus V(Q_{i+1,j}^P))$.
A vertex $v$ is \emph{marginal} if its grid coordinates $(i,j)$ satisfy $i \leq 3$, $j \leq 3$, $i \geq k-2$, or $j \geq k-2$.
A color $c$ is \emph{infrequent on $\mathcal{P}$} if it appears on $V(\mathcal{P})$, but there exists a family $\mathcal{P}_c \subseteq \mathcal{P}$
of at size at most $50(\mu(\mathcal{P}) + \mu(\mathcal{Q}))$ such that every vertex $v \in V(\mathcal{P})$ with $\psi(v) = c$ is either marginal
or lies on one of the paths in $\mathcal{P}_c$.
The definition of a color infrequent on $\mathcal{Q}$ is analogous.

For an inductive step, it suffices to show that there is always an infrequent color on $\mathcal{P}$ or an infrequent color on $\mathcal{Q}$.
Indeed, assume that $c$ is infrequent on $\mathcal{P}$ (the arguments for $\mathcal{Q}$ are symmetrical) and let $\mathcal{P}_c \subseteq \mathcal{P}$ be as in the above definition.
Construct a $k'$-pseudogrid $(\mathcal{P}',\mathcal{Q}')$ from $(\mathcal{P},\mathcal{Q})$ as follows.
Start with $(\mathcal{P}',\mathcal{Q}') = (\mathcal{P},\mathcal{Q})$. First, delete from $\mathcal{P}'$ the first and last $3$ paths, and similarly
for $\mathcal{Q}'$. Second, shorten every path $P_i \in \mathcal{P}'$ by deleting the edges of $P_{i,j}$ and $P_{i,j}^Q$ for $j \leq 3$ and $j \geq k-2$;
similarly shorten every path $Q_j \in \mathcal{Q}'$. Finally, delete all (shortened) paths of $\mathcal{P}_c$ from $\mathcal{P'}$, and delete a matching number
of paths from $\mathcal{Q}'$.
In this manner, we obtain a $k'$-pseudogrid $(\mathcal{P}',\mathcal{Q}')$ such that
    $k-k' \leq 6 + 50(\mu(\mathcal{P}) + \mu(\mathcal{Q}))$ and such that the color $c$ no longer appears on $V(\mathcal{P})$.
Therefore, $\mu(\mathcal{P}') + 1 \leq \mu(\mathcal{P})$ and $\mu(\mathcal{Q}') \leq \mu(\mathcal{Q})$.
The inductive step follows.

In the remainder of the proof, assume that there is no infrequent color on $\mathcal{P}$ nor an infrequent color on $\mathcal{Q}$.
We shall reach a contradiction
by exhibiting a simple noncentered path $P\subseteq \mathcal{P} \cup \mathcal{Q}$.

We perform the following selecting and marking scheme. Initially, no vertex is selected and no path is marked.
For every color $c$ that appears on $V(\mathcal{P})$, perform the following operation \emph{twice}.
\begin{enumerate}
	\item \label{i:grid:pick} Pick a vertex $v \in V(\mathcal{P})$ such that $\psi(v) = c$, $v$ is not marginal,
	  and $v$ does not lie on a marked path $P_i$. Let the grid coordinates of $v$ be $(i,j)$.
	\item \label{i:grid:mark} Select $v$ and mark all paths $P_{i'}$ for $|i'-i| \leq 10$ and all paths $Q_{j'}$ for $|j'-j| \leq 10$.
\end{enumerate}
Now swap the roles of $\mathcal P$ and $\mathcal Q$ and perform the above operation twice also for every color $c$ that appears on $V(\mathcal{Q})$.
In total, we select $2(\mu(\mathcal{P}) + \mu(\mathcal{Q}))$ vertices. For every selected vertex we mark $21$ paths of $\mathcal{P}$ and $21$ paths of $\mathcal{Q}$.
Since there is no infrequent color, there is always a vertex to choose at Step~\ref{i:grid:pick}, as otherwise the so-far marked paths would witness infrequency of $c$.
Thus, the above selecting and marking scheme is well-defined.

Let $v,v'$ be two distinct selected vertices and let $(i,j)$ and $(i',j')$ be their grid coordinates. By the above marking scheme, we have that
\begin{equation}\label{eq:grid:far}
3 < i,i',j,j' < k-2 \quad \mathrm{and} \quad |i-i'| + |j-j'| \geq 11.
\end{equation}

Consider now the following simple path $P$. We start with $P$ being the concatenation of even-numbered paths $P_i$ without the prefixes and suffixes $P_{i,0} \cup P_{i,k}$ in the natural order,
connected by paths $Q_{i,1} \cup Q_{i+1,1}^P \cup Q_{i+1,1}$ for $i$ divisible by $4$ and by $Q_{i,k} \cup Q_{i+1,k}^P \cup Q_{i+1,k}$ for $i\equiv 2 \pmod 4$
(so that paths $P_i$ with $i\equiv 2 \pmod 4$ are traversed forwards and paths $P_i$ with $i$ divisible by $4$ are traversed backwards).
Then, for every selected $v$ with grid coordinates $(i,j)$, we pick an even $i' \in \{i,i+1\}$ and modify locally $P \cap P_{i'}$ to pass through $v$.
In the modification, we use only parts of paths $P_{i,j}^Q \cup P_{i,j} \cup P_{i,j+1}^Q$, $P_{i+1,j}^Q \cup P_{i+1,j} \cup P_{i+1,j+1}^Q$,
$Q_{i,j}^P \cup Q_{i,j} \cup Q_{i+1,j}^P$, and $Q_{i,j+1}^P \cup Q_{i,j+1} \cup Q_{i+1,j+1}^P$.
By Equation~\eqref{eq:grid:far}, two such modifications do not interfere with each other and no such modification interferes with the connections contained in paths $Q_1$ and $Q_k$.
Consequently, the final path $P$ is a simple path contained in $\mathcal{P} \cup \mathcal{Q}$ that visits all selected vertices.
Such a path does not contain a center, which is the desired contradiction.
\qed\end{proof}

	\section{Treedepth Upper Bounds on Trees}\label{sec:trees}
\Schaffer proved that there is a linear time algorithm for finding a minimum-sized centered coloring of a tree $T$~\cite{schaeffer1989optimal}.
In this section we prove the following theorem by showing a correspondence between the centered coloring from \Schaffers algorithm and colors on paths in any linear coloring of $T$.
\begin{theorem}\label{thm:tree_bound}
Let $T$ be a tree of maximum degree $\Delta \geq 3$,
Then \Schaffers algorithm finds a centered coloring of $T$
with size at most $(\log_2 \Delta) \cdot \chilin(T)$.
\end{theorem}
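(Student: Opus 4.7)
The plan is to argue by induction on the label that \Schaffers algorithm assigns, showing that a high label forces a proportional lower bound on $\chilin(T)$. Recall that \Schaffers algorithm processes a rooted $T$ bottom-up and computes for each vertex $v$ a canonical label set $L(v)\subseteq\mathbb{Z}^+$ whose maximum $\lambda(v)$ equals $\td(T_v)$; the resulting coloring $v\mapsto\lambda(v)$ is centered and optimal, so $\max_v \lambda(v)=\chicen(T)$. The set $L(v)$ is obtained by a specific merge of the children's label sets, where $\lambda(v)$ is forced to be strictly greater than any label appearing in at least two of the sets $L(c_i)$ over children $c_i$ of $v$.

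The core of the proof is a structural lemma, proved by induction on $m$: if $\lambda(v)\geq m$, then any linear coloring of $T_v$ uses at least $\lceil m/\log_2\Delta\rceil$ distinct colors on $V(T_v)$. Applied to a vertex attaining the maximum label, this yields $\chilin(T)\geq \chicen(T)/\log_2\Delta$, which rearranges to the statement of the theorem. The base case ($m\leq\log_2\Delta$) is immediate since any nonempty subtree uses at least one color.

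For the inductive step I would analyze the merge rule at $v$. If some child $c$ already satisfies $\lambda(c)\geq m$, the inductive hypothesis applied to $T_c\subseteq T_v$ closes the case. Otherwise $\lambda(v)=m$ is created freshly, which by \Schaffers rule forces at least two children $c_1,c_2$ whose $L$-sets both contain some common label $m-1$; tracing this conflict downward locates vertices $u_i\in T_{c_i}$ with $\lambda(u_i)\geq m-1$, and iterating the branching argument embeds a subdivided tree of high branching factor inside $T_v$. The goal is then to show, via a pigeonhole argument on how a linear coloring can color such a branching structure, that every block of $\lceil\log_2\Delta\rceil$ successive conflict levels forces one additional linear color.

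The primary obstacle is formalizing the $\log_2\Delta$ amortization. A single \Schaffer-label escalation rests on as few as two conflicting children, so a naive argument would only give $\chicen\leq\chilin$, which is false. The refinement is to package $\lceil\log_2\Delta\rceil$ consecutive label escalations into a single ``stripe'' and argue, dually to the coloring pattern from Lemma~\ref{lem:binary_tree_lc}, that any linear coloring of the branching substructure produced by one stripe must introduce at least one new color: the number of distinct color-subsets a linear coloring can attach to the $\Delta$ branches at a single conflict point is at most $\Delta$, so after $\lceil\log_2\Delta\rceil$ stacked conflict levels one of the two non-centered path conditions of Definition~\ref{def:p_linear} is inevitably triggered unless a fresh color appears. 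Carefully tracking which vertices witness each escalation, and handling the rounding of $\log_2\Delta$ against integer labels, is the delicate part of the argument.
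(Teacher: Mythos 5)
Your high-level intuition is the right one --- \Schaffers rank lists on one side, families of colors appearing on downward paths on the other, with a $\log_2\Delta$ exchange rate between them --- but the proposal leaves the central quantitative step as an acknowledged hole, and the specific plan you sketch for filling it does not work. The lemma you want (``$\lambda(v)\geq m$ forces $\lceil m/\log_2\Delta\rceil$ linear colors on $T_v$'') is just the theorem itself for subtrees, so everything rests on the inductive step, and there you propose to trace each rank escalation to two conflicting children and thereby ``embed a subdivided tree of high branching factor.'' This fails already on paths: a path on $2^m-1$ vertices has $\td = m$, and \Schaffers algorithm escalates the rank $m$ times with no branching whatsoever --- each escalation comes from a single child whose rank list is full below $m$, not from two children sharing a label. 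Any correct argument must therefore extract information from the \emph{non-branching} escalations too, which is exactly why one has to track colors along paths rather than count branches. Your per-block pigeonhole (``one new color per $\lceil\log_2\Delta\rceil$ conflict levels'') also has a locality problem: the colors ``introduced'' in one block can be reused verbatim in a disjoint part of the tree witnessing a later block, so a local count of distinct colors is not a monotone invariant you can add up across blocks.

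The paper resolves both issues with a global potential argument rather than an amortized local one. For each $v$ it maintains a family $S(v)$ of color sets, each \emph{compatible} with some path from $v$ into $T_v$ (containing every uniquely-occurring color on that path and no color absent from it), built by an injective update map $\xi$: when the same set $X$ appears in two children's families, linearity forces $\psi(v)\notin X$ (else concatenating the two witness paths through $v$ gives a non-centered path), so $\xi(X)=X\cup\{\psi(v)\}$ strictly grows. The two potentials $\zeta(v)=\sum_{r\in L(v)}2^r$ and $\rho(v)=\sum_{X\in S(v)}(\Delta-1)^{|X|}$ then satisfy $\zeta(v)\leq 2+\sum_i\zeta(u_i)$ and $\rho(v)\geq(\Delta-1)+\sum_i\rho(u_i)$, giving $\rho\geq\zeta$ everywhere and hence $2^{k'}\leq\Delta^{k}$ at the root, i.e.\ $k'\leq(\log_2\Delta)\cdot\chilin(T)$ with no ceilings or rounding. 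The exponential weight $(\Delta-1)^{|X|}$ is precisely what absorbs the fact that a single color set may recur in up to $\Delta-1$ children --- the step your sketch gestures at with ``at most $\Delta$ color-subsets per conflict point'' but never makes precise. Without some substitute for this bookkeeping (the compatible-set family, the injectivity of $\xi$, and the two potential inequalities), your outline does not constitute a proof.
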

In particular, for trees of maximum degree $3$ we have $\chicen(T) \leq \log_2(3) \chilin(T)$,
   matching the lower bound of Lemma~\ref{lem:binary_tree_lc}.
We do not have any matching lower bound for larger $\Delta$.
In fact, we conjecture that none exists,
   that is, the upper bound of Theorem~\ref{thm:tree_bound} for $\Delta \geq 4$ is not tight.

\Schaffers algorithm finds a particular centered coloring whose colors are ordered in a way that reflects their roles as centers.
For this reason, the coloring is called a \emph{vertex ranking} and the colors are referred to as \emph{ranks}; it guarantees that in each subgraph, the vertex of maximum rank is also a center.
We will use this terminology in this section to clearly distinguish between the ranks in the vertex ranking and colors in the linear coloring.
Note that the canonical centered coloring of a treedepth decomposition is a vertex ranking if the colors are ranked decreasing from the root downwards, which implies that every centered coloring can be converted to a vertex ranking of the same size.
Of central importance to \Schaffers algorithm are what we will refer to as \emph{rank lists}.\looseness-1
\begin{definition}
	For a vertex ranking $r$ of tree $T$, the \emph{rank list} of $T$, denoted $L(T)$, can be defined recursively as $L(T) = L(T\backslash T_v)\cup \{r(v)\}$ where $v$ is the vertex of maximum rank in $T$.
\end{definition}
\Schaffers algorithm arbitrarily roots $T$ and builds the ranking from the leaves to the root of $T$, computing the rank of each vertex from the rank lists of each of its children.
For brevity, we denote $L(v) = L(T_v)$ for every $v$ in $T$.
\begin{proposition}[\cite{schaeffer1989optimal}]\label{prop:rank_lists}
	Let $r$ be a vertex ranking of $T$ produced by \Schaffers algorithm and let $v\in T$ be a vertex with children $u_1,\dots, u_\ell$.
	If $x$ is the largest integer appearing on rank lists of at least two children of $v$ (or 0 if all such rank lists are pairwise disjoint) then $r(v)$ is the smallest integer satisfying $r(v)> x$ and $r(v)\notin \bigcup_{i=1}^{\ell} L(u_i)$.
\end{proposition}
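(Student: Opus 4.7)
The plan is to identify $r(v)$ as the unique smallest rank compatible with the vertex-ranking constraint given the ranks already assigned to $T_v \setminus \{v\}$; since \Schaffers algorithm processes $T$ bottom-up and greedily selects the smallest valid rank, this pins down its output. Concretely, I would prove two claims: every valid vertex ranking extending the descendants' ranks satisfies $r(v) > x$ and $r(v) \notin \bigcup_{i=1}^{\ell} L(u_i)$ (necessity), and the smallest integer meeting both constraints actually yields a valid ranking of $T_v$ (sufficiency).

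The key auxiliary step is to unfold the recursive definition of rank lists into the following characterization, proved by induction on subtree size: for each vertex $u$, $y \in L(T_u)$ if and only if there exists $z \in T_u$ with $r(z) = y$ such that no vertex on the tree-path from $z$ to $u$ has rank exceeding $y$. That is, $L(u_i)$ enumerates the ranks in $T_{u_i}$ that are ``exposed'' at $u_i$: they are not dominated from above by a larger rank within $T_{u_i}$, and so they are exactly the ranks that can still collide with ranks of ancestors.

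For necessity, suppose $y$ lies in both $L(u_i)$ and $L(u_j)$ for distinct $i, j$, and pick exposure witnesses $z_i \in T_{u_i}$ and $z_j \in T_{u_j}$ of rank $y$. By the characterization, the tree-paths from $z_i$ up to $u_i$ and from $u_j$ down to $z_j$ contain no rank above $y$; concatenating through $v$ yields a $z_i$-$z_j$-path whose only internal candidate for a rank above $y$ is $v$ itself. The vertex-ranking property then forces $r(v) > y$, and maximizing over such $y$ gives $r(v) > x$. An analogous but simpler argument forbids $r(v) \in L(u_i)$: an exposure witness of rank $r(v)$ in $T_{u_i}$ pairs with $v$ into a path containing two vertices of rank $r(v)$ and nothing strictly larger between them.

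For sufficiency, take $r(v)$ to be the formula value and consider an arbitrary tree-path $P$ in $T_v$. Paths lying inside a single $T_{u_i}$ are handled by the inductive validity of the ranking on $T_{u_i}$. For a path crossing between two subtrees $T_{u_i}$ and $T_{u_j}$ through $v$, the maximum rank $m_i$ on the $T_{u_i}$-part is attained uniquely (by inductive validity on $T_{u_i}$) and lies in $L(u_i)$ by the exposure lemma, and likewise $m_j \in L(u_j)$. A short case analysis on $m_i$, $m_j$, and $r(v)$ then identifies a unique maximum on $P$: if $m_i = m_j$ the shared value lies in $L(u_i) \cap L(u_j)$ and hence is at most $x < r(v)$, reducing to the unequal case; otherwise the largest of $\{m_i, m_j, r(v)\}$ is unique, because $r(v) \notin L(u_i) \cup L(u_j)$ rules out ties between $r(v)$ and a subpath maximum. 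Paths with an endpoint at $v$ are a degenerate instance of the same analysis. I expect the exposure characterization and its use in this tie-breaking casework to be the main obstacle, since both require translating local properties of $P$ (the identity of its subpath maxima) into global membership statements in some $L(u_i)$.
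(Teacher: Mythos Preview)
The paper does not prove this proposition at all: it is stated as a cited result from \Schaffer's original paper~\cite{schaeffer1989optimal}, with no accompanying argument. So there is no ``paper's own proof'' to compare against.

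That said, your proposal is correct and essentially reconstructs the standard justification for \Schaffers greedy rule. The exposure characterization of $L(u)$ (a rank $y$ lies in $L(u)$ exactly when some vertex of rank $y$ in $T_u$ has no strictly larger rank above it on the path to $u$) is the right auxiliary lemma, and it follows cleanly by induction on the recursive definition of $L$. Your necessity argument is fine; the sufficiency casework is also fine, and the point you flagged as the main obstacle---that the subpath maximum $m_i$ actually lands in $L(u_i)$---drops out immediately from the exposure lemma, since the vertex realizing $m_i$ on the $T_{u_i}$-subpath has only smaller ranks between it and $u_i$. One small thing worth making explicit in the write-up: the ``path'' in a tree between two vertices is unique, so connected subgraphs of $T_v$ that contain $v$ really do decompose as at most two leaf paths meeting at $v$; you use this implicitly when you reduce to the $m_i$/$m_j$ analysis.
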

We root $T$ at an arbitrary leaf of $T$ and let $r$ be a ranking output by Schaffers algorithm
applied on (rooted) $T$.
With a vertex $v$ in $T$ we associate the following potential.
$$\zeta(v) = \sum_{r \in L(v)} 2^r.$$
The following is immediate from Proposition~\ref{prop:rank_lists}:
\begin{lemma}\label{lem:trees:rank}
For every $v$ in $T$ with children $u_1,u_2,\ldots,u_\ell$, it holds that
$$\zeta(v) \leq 2 + \sum_{i=1}^\ell \zeta(u_i).$$
Furthermore, the equality holds if and only if all rank lists $L(u_i)$ are pairwise disjoint.
\end{lemma}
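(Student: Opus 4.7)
The plan is to pin down what $L(v)$ looks like in terms of $r(v)$ and the children's rank lists, and then compare $\zeta(v)$ with $\sum_i \zeta(u_i)$ by a direct arithmetic argument.

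The structural claim I would establish is that for Schäffer's ranking,
\[
  L(v) \;=\; \{r(v)\} \;\cup\; \bigcup_{i=1}^\ell \{y \in L(u_i) : y > r(v)\},
\]
with the right-hand union being disjoint. Disjointness is immediate from Proposition~\ref{prop:rank_lists}: since $r(v) > x$ and $x$ is the largest rank shared between two distinct $L(u_i)$, any $y > r(v)$ that occurs in some $L(u_i)$ occurs in exactly one. For set equality I would unfold the recursive definition of $L(v)$ as a peeling process---repeatedly take a maximum-rank vertex $w$ in the current subtree, record $r(w)$, and delete $T_w$. Inside each $T_{u_i}$ this peeling visits the same vertices in the same order as the standalone process that defines $L(u_i)$, until the remaining maximum rank drops to $r(v)$; at that moment $v$ becomes the unique maximum, all of the remaining $T_v$ is deleted in a single step, and the process terminates.

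Given the structural claim, let $m(y) = |\{i : y \in L(u_i)\}|$. For any $y > r(v)$ one has $y > x$, so $m(y) \le 1$; the contributions of such $y$ to $\zeta(v)$ and to $\sum_i \zeta(u_i)$ cancel in the difference. What remains is
\[
  \zeta(v) - \sum_{i=1}^\ell \zeta(u_i) \;=\; 2^{r(v)} \;-\; \sum_{\substack{y < r(v)\\ y \in \bigcup_i L(u_i)}} m(y)\,2^y.
\]
Proposition~\ref{prop:rank_lists} additionally forces every integer in $\{x+1,\ldots,r(v)-1\}$ to lie in $\bigcup_i L(u_i)$ (each with $m(y) = 1$ since $y > x$), so these alone contribute $2^{r(v)} - 2^{x+1}$ to the subtracted sum.

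Splitting on $x$ finishes the proof. If $x = 0$, meaning all $L(u_i)$ are pairwise disjoint, the interval becomes $\{1,\ldots,r(v)-1\}$ with $m(y) = 1$ throughout, the sum equals $2^{r(v)} - 2$, and the difference is exactly $2$. If $x \ge 1$, then $x$ itself is shared by at least two children, so $m(x) \ge 2$; the sum is at least $(2^{r(v)} - 2^{x+1}) + 2\cdot 2^x = 2^{r(v)}$, yielding $\zeta(v) - \sum_i \zeta(u_i) \le 0 < 2$, a strict inequality. I expect the main obstacle to be a fully rigorous proof of the structural claim: the ``immediate from Proposition'' phrasing bakes in the peeling-order correspondence, which I would make precise by strong induction on $|T_v|$, separating the subcases $r(v) > \max \bigcup_i L(u_i)$ (where $L(v) = \{r(v)\}$) and $r(v) < \max \bigcup_i L(u_i)$ (where the top rank is peeled first and the argument recurses on the smaller remainder).
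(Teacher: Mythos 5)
Your proof is correct, and it supplies exactly the argument the paper omits: the paper states this lemma as ``immediate from Proposition~\ref{prop:rank_lists}'' with no proof, and your route---establishing $L(v)=\{r(v)\}\cup\{y\in\bigcup_i L(u_i): y>r(v)\}$ from the peeling definition, cancelling the ranks above $r(v)$, and using the fact that $\{x+1,\dots,r(v)-1\}\subseteq\bigcup_i L(u_i)$ with multiplicity one while $x$ (if nonzero) has multiplicity at least two---is the intended one and handles both the inequality and the equality characterization cleanly. The one piece you rightly flag as needing care, the structural identity for $L(v)$, does go through by the induction you sketch (equivalently, $L(v)$ is the set of ranks visible from $v$, i.e., ranks $c$ achieved by some $w\in T_v$ with no larger rank on the $w$--$v$ path).
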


Let $\psi$ be a linear coloring of $T$ with $k := \chilin(T)$ colors.
Our proof of Theorem~\ref{thm:tree_bound} is based on tracking sets of colors of $\psi$ on paths terminating at the current vertex as \Schaffers algorithm moves up the rooted tree.
Given a path $P\subseteq T$ and a linear coloring $\psi$ of size $k$, we say a \emph{color set} $X\subseteq \{1,\dots, k\}$ is \emph{compatible} with $P$ if both the following conditions are true:
\begin{enumerate}
	\item For every center $v\in P$, $\psi(v)\in X$.
	\item For every color $c\in X$, there is a vertex $u\in P$ such that $\psi(u) = c$.
\end{enumerate}
In other words, a compatible set must not contain colors not found on $P$, must contain each color appearing uniquely in $P$, and may or may not contain any colors appearing multiple times on $P$.
For each $v\in T$, let $S(v)$ be a set of sets defined recursively as follows.
If $v$ is a leaf, $S(v) = \{\{\psi(v)\}\}$.
Otherwise, let $u_1,\dots, u_\ell$ be the children of $v$, $S' = \bigcup_{i=1}^{\ell} S(u_i)$, and $\xi: S'\to 2^{[k]}$ be an injective function such that
$$
	\xi(X) =
	\begin{cases}
		X\setminus\{\psi(v)\} & \text{if } \psi(v)\in X \text{ and } X\backslash\{\psi(v)\} \in S' \\
		X\cup\{\psi(v)\} & \text{otherwise}
	\end{cases}
$$
for all $X\in S'$.
Then $S(v) = \{\xi(X): X\in S'\}\cup\{\{\psi(v)\}\}$.
We start with the following straightforward observation.
\begin{lemma}\label{lem:trees:bi}
For every $v \in T$ it holds that $\emptyset \notin S(v)$.
Consequently, for every nonleaf $v \in T$,
  $\xi$ is a bijection between $S'$ and $S(v) \setminus \{\{\psi(v)\}\}$.
\end{lemma}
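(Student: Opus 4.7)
The plan is to prove both assertions by bottom-up induction on the rooted structure of $T$. For $\emptyset \notin S(v)$, the base case is immediate since $S(v) = \{\{\psi(v)\}\}$ for a leaf. In the inductive step the explicit element $\{\psi(v)\}$ is nonempty, and for each $X \in S'$ the image $\xi(X)$ equals either $X \cup \{\psi(v)\}$ (which contains $\psi(v)$ and is hence nonempty) or $X \setminus \{\psi(v)\}$ under the first branch of the definition; but that branch is only taken when $X \setminus \{\psi(v)\} \in S'$, and the inductive hypothesis on the children forces this set to be nonempty.

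For the bijection, I would first verify $\xi$ is injective by short case analysis. Mixed-branch collisions are impossible because the first branch outputs sets omitting $\psi(v)$ while the second branch outputs sets containing $\psi(v)$; same-branch collisions force $X=Y$ directly, with the one subtle subcase (both values computed via the second branch, with $\psi(v) \in X$ but $\psi(v) \notin Y$) eliminated by $\emptyset \notin S'$. Since $S(v) = \xi(S') \cup \{\{\psi(v)\}\}$ by construction, the inclusion $\xi(S') \supseteq S(v) \setminus \{\{\psi(v)\}\}$ is automatic, so the bijection reduces to $\{\psi(v)\} \notin \xi(S')$. Unpacking $\xi(X) = \{\psi(v)\}$: the first branch is impossible since it excludes $\psi(v)$, and the second branch forces $X \subseteq \{\psi(v)\}$, leaving only $X = \emptyset$ (excluded by the first assertion) or $X = \{\psi(v)\}$, the latter arising only when $\{\psi(v)\} \in S'$.

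The main obstacle is therefore showing $\{\psi(v)\} \notin S'$, which is where the hypothesis that $\psi$ is a linear coloring becomes essential. I would establish, by a parallel induction tracking the three branches used to assemble $S(u)$, that every $X \in S(u)$ is labeled by a downward path in $T_u$ ending at $u$ whose set of odd-multiplicity colors equals either $X$ or $X \setminus \{\psi(u)\}$; the explicit entry $\{\psi(u)\}$ and the first branch of $\xi$ produce paths with odd-set exactly $X$, while the ``pseudo-toggle'' subcase of the second branch (in which $\psi(u) \in X$ yet $X \setminus \{\psi(u)\} \notin S'$, forcing $\xi(X) = X$) produces paths whose odd-set is $X \setminus \{\psi(u)\}$. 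Applied to a hypothetical $\{\psi(v)\} \in S(u_i)$ for a child $u_i$ of $v$, where $\psi(u_i) \neq \psi(v)$ by properness, both candidate sets coincide with $\{\psi(v)\}$, so one obtains a downward path in $T_{u_i}$ ending at $u_i$ on which $\psi(v)$ appears an odd number of times and every other color appears an even number of times. Prepending $v$ to this path flips $\psi(v)$'s parity to even while leaving other parities untouched, yielding a path in $T$ with no unique color and contradicting linearity of $\psi$; hence $\{\psi(v)\} \notin S'$ and the bijection follows.
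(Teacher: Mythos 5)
The paper offers no proof of this lemma---it is recorded as a ``straightforward observation''---so the comparison here is really about whether your argument is sound. Your proof of $\emptyset\notin S(v)$ is correct, and you have correctly put your finger on the real content of the second assertion: the ``consequently'' is not purely formal, since $\xi$ fails to be a bijection onto $S(v)\setminus\{\{\psi(v)\}\}$ exactly when $\{\psi(v)\}\in S'$, and this genuinely can happen for proper colorings that are not linear (root-to-leaf path colored $1,2,1,2$: one gets $\{1\}\in S(u)$ for the child $u$ of the root $v$ with $\psi(v)=1$, and $\xi(\{1\})=\{1\}=\{\psi(v)\}$). One small repair in the injectivity analysis: the subcase with both sets in the second branch, $\psi(v)\in X$, $\psi(v)\notin Y$, and $\xi(X)=\xi(Y)$ forces $Y=X\setminus\{\psi(v)\}\in S'$, which contradicts $X$ having taken the second branch at all; it is the branch condition, not $\emptyset\notin S'$, that eliminates this case.

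The more serious soft spot is the odd-multiplicity invariant you use to rule out $\{\psi(v)\}\in S'$. As stated (``the odd-multiplicity color set of the witness path for $X\in S(u)$ equals $X$ or $X\setminus\{\psi(u)\}$''), it does not obviously survive the induction: if $X'\in S(u_i)$ arises from your pseudo-toggle case, its witness path has odd-set $X'\setminus\{\psi(u_i)\}$ with $\psi(u_i)\in X'$, and when $X'$ is then processed at the parent $v$, the extended path's odd-set still lacks $\psi(u_i)$, whereas $\psi(u_i)$ remains in $\xi(X')$ (only $\psi(v)\neq\psi(u_i)$ is toggled). The result is neither $\xi(X')$ nor $\xi(X')\setminus\{\psi(v)\}$, so you would have to argue separately that this combination cannot occur under linearity, which is not clear. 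The robust substitute is the compatibility invariant of Lemma~\ref{lem:trees:paths}: every exactly-once color of the witness path lies in $X$, and every color of $X$ appears on the path. Its proof does not depend on the present lemma (it only needs that every element of $S(v)\setminus\{\{\psi(v)\}\}$ has \emph{some} $\xi$-preimage in $S'$, which holds by the definition of $S(v)$), so there is no circularity. Applied to a hypothetical $\{\psi(v)\}\in S(u_i)$ together with linearity, it gives a path ending at $u_i$ on which $\psi(v)$ occurs exactly once and no other color occurs exactly once; appending $v$ then yields a centerless path, which is the contradiction you were after, reached without the fragile parity bookkeeping.
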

We prove that the construction of $S(v)$ preserves compatibility of sets.
\begin{lemma}\label{lem:trees:paths}
	For all vertices $v\in T$ and each $X\in S(v)$, there is a corresponding path $P\subseteq T_v$ with $v$ as an endpoint such that $P$ is compatible with $X$.
\end{lemma}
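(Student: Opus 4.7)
The plan is to prove the statement by structural induction matching the bottom-up construction of $S(\cdot)$. The base case is a leaf $v$, where $S(v) = \{\{\psi(v)\}\}$ and the single-vertex path $P=v$ is trivially compatible with $\{\psi(v)\}$.

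For the inductive step, fix a non-leaf $v$ with children $u_1,\ldots,u_\ell$ and some $X \in S(v)$. By the definition of $S(v)$, either $X = \{\psi(v)\}$, in which case the single-vertex path $P = v$ again suffices, or $X = \xi(Y)$ for some $Y \in S' := \bigcup_i S(u_i)$; pick $j$ so that $Y \in S(u_j)$. In the latter case, the inductive hypothesis furnishes a path $P' \subseteq T_{u_j}$ with endpoint $u_j$ that is compatible with $Y$. Since $v$ is the parent of $u_j$, we extend to $P := P' \cdot v \subseteq T_v$, a path with endpoint $v$.

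It remains to verify that $P$ is compatible with $X$, splitting according to the two branches of $\xi$. If $\psi(v) \in Y$ and $Y \setminus \{\psi(v)\} \in S'$, then $X = Y \setminus \{\psi(v)\}$; compatibility of $Y$ with $P'$ ensures $\psi(v)$ already occurs on $P'$, so $\psi(v)$ appears at least twice on $P$ and is not a center, while any other center of $P$ must be a center of $P'$ of color distinct from $\psi(v)$, and hence lies in $Y \setminus \{\psi(v)\} = X$; since $X \subseteq Y$, every color of $X$ appears on $P' \subseteq P$. In the remaining branch, $X = Y \cup \{\psi(v)\}$; every color of $X$ clearly appears on $P$. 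For the centers, any center of $P$ either equals $\psi(v)$ (which is in $X$) or is a color $c \neq \psi(v)$ appearing uniquely on $P'$, making $c$ a center of $P'$ and thus an element of $Y \subseteq X$.

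The argument is essentially bookkeeping about how appending $v$ to $P'$ promotes and demotes centers of the path; the only real subtlety is ensuring the case split matches the two branches of $\xi$, and recognizing in the first branch that $\psi(v)$ is guaranteed to appear on $P'$ (this is where compatibility of $Y$ with $P'$ is essential). Lemma~\ref{lem:trees:bi} ensures $\xi$ is well-defined as an injective map so that the correspondence $X \leftrightarrow Y$ used above is unambiguous.
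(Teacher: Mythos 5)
Your proof is correct and follows essentially the same route as the paper's: induction over the tree, pulling $X$ back through $\xi$ to a set in some $S(u_j)$, extending the inductively obtained path by $v$, and checking compatibility by tracking how $\psi(v)$ is added or removed in the two branches of $\xi$. The only difference is presentational\textemdash you split the verification into the two branches explicitly, whereas the paper condenses it to the single observation that one need only check the case $\psi(v)\notin X$\textemdash but the underlying argument is identical.
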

\begin{proof}
	It is clear that the lemma holds at the leaves of $T$, so we proceed by inductively showing the recursive step preserves the property.
  Observe that the path consisting of $v$ only is compatible with $\{\psi(v)\} \in S(v)$.
  For any $X\in S(v) \setminus \{\{\psi(v)\}\}$, there is a child $u$ of $v$ such that $X' = \xi^{-1}(X)$ is in $S(u)$.
	By the inductive hypothesis, there must be a path $P'$ terminating at $u$ such that $X'$ is compatible with $P'$.
	We claim that $P = P'\cdot \{v\}$ is compatible with $X$.
	Since $X\triangle X'\subseteq \{\psi(v)\}$ and each color $c\neq \psi(v)$ appears the same number of times in $P$ and $P'$, it is only necessary to prove the requirements for compatibility are satified with respect to $\psi(v)$.
	Moreover, because $\psi(v)$ appears at least once on $P$ it suffices to show that if $\psi(v)\notin X$, then $\psi(v)$ appears multiple times on $P$.
	By the definition of $\xi$, $\psi(v)\notin X$ implies $\psi(v) \in X'$ and thus $v$ is not a center of $P$.
\qed\end{proof}

Define $\rho(v) = \sum_{X\in S(v)} (\Delta-1)^{|X|}$.
We observe the following
\begin{lemma}\label{lem:trees:rho}
	For any vertex $v \in T$ with children $u_1,\dots, u_\ell$,
  $\rho(v)\geq (\Delta-1) + \sum_{i=1}^{\ell} \rho(u_i)$.
\end{lemma}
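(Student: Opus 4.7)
The plan is to recast both sides of the desired inequality in terms of sums over $S'$ and then charge contributions pairwise. Writing $a = \Delta - 1$ and, for each $X \in S'$, $n(X) = |\{i : X \in S(u_i)\}|$, we have $\rho(v) = a + \sum_{X \in S'} a^{|\xi(X)|}$ (using Lemma~\ref{lem:trees:bi}) and $\sum_i \rho(u_i) = \sum_{X \in S'} n(X) a^{|X|}$, so the lemma reduces to $\sum_{X \in S'} a^{|\xi(X)|} \geq \sum_{X \in S'} n(X) a^{|X|}$. I would partition $S'$ according to whether $\psi(v) \in X$ and whether $X \triangle \{\psi(v)\}$ also lies in $S'$; this groups $S'$ into pairs of the form $\{Y, Y \cup \{\psi(v)\}\}$ together with two types of unpaired singletons, aligned precisely with the three cases in the definition of $\xi$.

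The main obstacle is bounding the multiplicities $n(X)$, which a priori could be as large as $\ell \leq a$ and would otherwise overwhelm the gains from $\xi$. I would establish two restrictions using Lemma~\ref{lem:trees:paths} and the linear coloring property of $\psi$: (A) if $\psi(v) \in X$ then $n(X) = 1$, and (B) if both $Y$ and $Y \cup \{\psi(v)\}$ lie in $S'$ then $n(Y) = 1$. For (A), suppose $X \in S(u_i) \cap S(u_j)$ with $i \neq j$, and let $P_i \subseteq T_{u_i}$, $P_j \subseteq T_{u_j}$ be paths ending at $u_i, u_j$ that are compatible with $X$. Concatenating $P_i$, the edges $u_iv$ and $vu_j$, and the reverse of $P_j$ yields a simple path $P$ in $T$, since $T_{u_i}$ and $T_{u_j}$ are vertex-disjoint and neither contains $v$. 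Every $c \in X$ appears on both $P_i$ and $P_j$, so it contributes at least two occurrences to $P$; every $c \notin X$ that appears on $P_i$ or $P_j$ appears at least twice there by compatibility; and $\psi(v) \in X$ is handled by the first case. Hence $P$ has no center, contradicting that $\psi$ is a linear coloring. For (B), claim (A) already forces $n(Y \cup \{\psi(v)\}) = 1$, so $Y \cup \{\psi(v)\}$ comes from a unique child $u_{i_0}$; if $Y \in S(u_j)$ for some $j \neq i_0$, the same concatenation through $v$ again produces a centerless path, where the single occurrence of $\psi(v)$ at $v$ together with the at-least-one occurrence guaranteed by compatibility of $Y \cup \{\psi(v)\}$ on $P_{i_0}$ makes $\psi(v)$ appear at least twice.

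With (A) and (B) in hand, the inequality becomes a routine tally on the three classes. A pair $\{Y, X = Y \cup \{\psi(v)\}\}$ with $\psi(v) \notin Y$ contributes $a^{|Y|+1} + a^{|Y|} - n(Y) a^{|Y|} - n(X) a^{|Y|+1} = a^{|Y|}[(1 - n(Y)) + a(1 - n(X))]$, which vanishes since $n(X) = n(Y) = 1$. An unpaired $X$ with $\psi(v) \in X$ and $X \setminus \{\psi(v)\} \notin S'$ satisfies $\xi(X) = X$ and $n(X) = 1$, contributing $0$. An unpaired $X$ with $\psi(v) \notin X$ and $X \cup \{\psi(v)\} \notin S'$ satisfies $|\xi(X)| = |X| + 1$ and $n(X) \leq \ell \leq a$, contributing $a^{|X|}(a - n(X)) \geq 0$. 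Summing these nonnegative contributions over $S'$ gives $\rho(v) - a - \sum_i \rho(u_i) \geq 0$, which is the claim.
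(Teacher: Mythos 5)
Your proposal is correct and follows essentially the same route as the paper's proof: the paper's two structural observations (sets in more than one $S(u_i)$ cannot contain $\psi(v)$, and both members of a pair $\{Y, Y\cup\{\psi(v)\}\}\subseteq S'$ come from a single child) are exactly your claims (A) and (B), proved by the same path-concatenation argument through $v$. Your explicit multiplicity bookkeeping with $n(X)$ and the three-case tally is just a slightly more granular presentation of the paper's $S_1$/$S_M$ split.
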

\begin{proof}
First, note that $\ell \leq \Delta-1$.
Also, the lemma is straightforward for a leaf $v$ as then $S(v) = \{\{\psi(v)\}\}$ and
$\rho(v) = \Delta-1$. Assume then $\ell \geq 1$.

Recall that $S' = \bigcup_{i=1}^\ell S(u_i)$.
Let $S_1$ be the set of all color sets that appear in exactly one $S(u_i)$ and $S_M$ be those that occur in multiple $S(u_i)$'s; we have $S' = S_1 \uplus S_M$.
Note that for each $X\in S_M$, $\psi(v)\notin X$ or else concatenating the corresponding compatible paths with $v$ creates a path with no center.
Likewise, if there are distinct color sets $Y$ and $Y' = Y\backslash\{\psi(v)\}$ such that $\{Y,Y'\} \subseteq S_1\cup S_M$, then $Y,Y'$ both belong to the same $S(u_i)$;
in particular, both $Y,Y'$ belong to $S_1$.

By the definition of $\xi$, for each color set $X\in S(v) \setminus \{\{\psi(v)\}\}$
either $|X| \geq |\xi^{-1}(X)|$ or $|X|  = |\xi^{-1}(X)| - 1$.
In the latter case, there is a corresponding color set $X' = X\cup \{\psi(v)\}$ such that $X'\in S(v)$ and $\xi^{-1}(X') = X$. Also, from the discussion in the previous paragraph we infer
that this latter case can only happen when $X,X' \in S_1$.
Hence,
\begin{align*}
\sum_{X' \in S_1} (\Delta-1)^{|\xi(X')|} &\geq \sum_{X' \in S_1} (\Delta-1)^{|X'|},\ \mathrm{and} \\
\sum_{X' \in S_M} (\Delta-1)^{|\xi(X')|} &= (\Delta-1) \sum_{X' \in S_M} (\Delta-1)^{|X'|}.
\end{align*}

We infer that
\begin{align*}
\rho(v) &= (\Delta-1)^{|\{\psi(v)\}|} + \sum_{X \in S(v) \setminus \{\{\psi(v)\}\}} (\Delta-1)^{|X|} \\
        &= (\Delta-1) + \sum_{X' \in S'} (\Delta-1)^{|\xi(X')|} \\
        &= (\Delta-1) + \sum_{X' \in S_1} (\Delta-1)^{|\xi(X')|} + \sum_{X' \in S_M} (\Delta-1)^{|\xi(X')|} \\
        &\geq (\Delta-1) + \sum_{X' \in S_1} (\Delta-1)^{|X'|} + (\Delta-1)\sum_{X' \in S_M} (\Delta-1)^{|X'|} \\
        &\geq (\Delta-1) + \sum_{i=1}^\ell \sum_{X' \in S(u_i)} (\Delta-1)^{|X'|} \\
        &\geq (\Delta-1) + \sum_{i=1}^\ell \rho(u_i).
\end{align*}
\qed\end{proof}
We conclude with the proof of Theorem~\ref{thm:tree_bound}.

\begin{proof}[Theorem~\ref{thm:tree_bound}]
For every leaf $v \in T$, we have $\rho(v) = \Delta-1 \geq 2 = \zeta(v)$.
Lemmas~\ref{lem:trees:rank} and~\ref{lem:trees:rho} show inductively
that $\rho(v) \geq \zeta(v)$ for every $v \in T$.
If $k'$ is the size of the centered coloring output by \Schaffers algorithm, then
for the root $v_0$ of $T$ we have
$$2^{k'} \leq \zeta(v_0) \leq \rho(v) \leq \sum_{X \subseteq [k]} (\Delta-1)^{|X|} = \Delta^k.$$
Thus $k' \leq (\log_2 \Delta) \cdot k$.
\qed\end{proof}

	\section{Treedepth Upper Bounds on Interval Graphs}\label{sec:interval}
	Because linear colorings are equivalent to centered colorings when restricted to paths, we turn our attention to the linear coloring numbers of ``pathlike'' graphs.
	We investigate a particular class of ``pathlike'' graphs in this section and prove a quadratic relationship between their centered and linear coloring numbers.
	\begin{definition}\label{def:interval}
		A graph $G$ is an \emph{interval graph} if there is an injective mapping $f$ from $V(G)$ to intervals on the real line such that $uv\in E(G)$ iff $f(u)$ and $f(v)$ overlap.
	\end{definition}
	We refer to the mapping $f$ as the \emph{interval representation} of $G$.
	Since the overlap between intervals $f(u)$ and $f(v)$ is independent of the interval representations of the other vertices, every subgraph of an interval graph is also an interval graph.
	The interval representation of $G$ implies a natural ``left-to-right'' layout that gives it the ``pathlike'' qualities, which are manifested in restrictions on the length of induced cycles (\emph{chordal}) and paths between vertex triples (\emph{AT-free}).
	\begin{definition}
		A graph is \emph{chordal} if it has no induced cycles of length $\geq 4$.
	\end{definition}
	\begin{definition}\label{def:at}
		Vertices $u,v,w$ are an \emph{asteroidal triple} (AT) if there exist $uv$-, $vw$-, and $wu$-paths $P_{uv}$, $P_{vw}$, and $P_{wu}$, respectively, such that $N[w]\cap P_{uv} = N[u]\cap P_{vw} = N[v]\cap P_{uv} = \emptyset$.
		A graph with no AT is called \emph{AT-free}.
	\end{definition}
	\begin{proposition}[\cite{lekkeikerker1962representation}]\label{prop:chordal_at-free}
		A graph $G$ is an interval graph iff $G$ is chordal and AT-free.
	\end{proposition}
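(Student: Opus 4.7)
The plan is to prove both directions of this classical equivalence, where the forward direction follows from direct geometric reasoning on the real line and the reverse direction requires a structural analysis via clique trees.

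For the forward direction, fix an interval representation $f$ of an interval graph $G$. To see that $G$ is chordal, suppose for contradiction there is an induced cycle $C = v_1 v_2 \cdots v_k v_1$ with $k \geq 4$, and let $v_i$ be the vertex whose interval $f(v_i)$ has the rightmost left endpoint among vertices of $C$. Since $v_{i-1}$ and $v_{i+1}$ are both adjacent to $v_i$ but have left endpoints weakly to the left of $f(v_i)$'s, both $f(v_{i-1})$ and $f(v_{i+1})$ must contain the left endpoint of $f(v_i)$; hence they intersect each other, contradicting $C$ being induced. For AT-freeness, suppose $\{u,v,w\}$ is an AT. Since they are pairwise nonadjacent their intervals are pairwise disjoint, so one of them, say $v$, lies strictly between the other two on the real line. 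Then any $uw$-path must at some step jump over $f(v)$, meaning it contains a vertex whose interval crosses $f(v)$ and is therefore adjacent to $v$, contradicting the definition of an AT.

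For the reverse direction, the key tool is the clique tree of a chordal graph: every chordal graph $G$ admits a tree $T$ whose nodes are the maximal cliques of $G$ such that, for every $v \in V(G)$, the set of nodes containing $v$ induces a subtree $T_v \subseteq T$. It is a standard fact that $G$ is an interval graph if and only if some clique tree of $G$ is a path; indeed, laying the maximal cliques left-to-right along such a path and assigning each $v$ the interval spanning the consecutive cliques of $T_v$ yields an interval representation. So it suffices to show that if $G$ is chordal and AT-free, some clique tree is a path.

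I would do this by contrapositive: if every clique tree has a node of degree $\geq 3$, exhibit an AT. Pick such a clique tree $T$ and a node $K$ with three neighboring subtrees $T_1, T_2, T_3$, each meeting $K$ at a neighbor clique $K_i$. By maximality of $K$, for each $i$ there exists $u_i$ appearing in some clique of $T_i$ but not in $K$; otherwise $K_i \subseteq K$. I claim $\{u_1, u_2, u_3\}$ is an AT. They are pairwise nonadjacent because the subtrees $T_{u_i}$ lie in disjoint branches (any common clique would force $u_i$ to be in $K$). To route a $u_j u_k$-path avoiding $N[u_i]$, observe that $N[u_i]$ is contained in the union of cliques of $T_{u_i} \subseteq T_i$, while a path through cliques in $T_j$, then $K$, then $T_k$ stays outside $T_i$; such connecting paths exist because consecutive cliques on a clique tree always share a vertex.

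The main obstacle is executing the last step carefully: one must verify that paths realizing the AT can genuinely be produced from the clique-tree structure while avoiding the branch containing the third vertex, and that no vertex of $K$ accidentally lies in $N[u_i]$ (which follows because any neighbor of $u_i$ lies in a clique containing $u_i$, hence in $T_i$, and is only in $K$ if it already belongs to $K \cap K_i$, which is disjoint from $\{u_i\}$). Modulo these bookkeeping details, the argument reduces the topological structure of the clique tree directly to the combinatorial obstruction ruled out by AT-freeness.
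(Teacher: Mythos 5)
First, a point of comparison: the paper does not prove this proposition at all. It is the classical Lekkerkerker--Boland characterization, stated with a citation to \cite{lekkeikerker1962representation} and used as a black box, so the only question is whether your argument stands as a correct self-contained proof. Your forward direction does: the rightmost-left-endpoint argument for chordality and the ``union of intervals along a path is an interval'' argument for AT-freeness are both standard and correct.

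The reverse direction, however, has a genuine gap at the step where you extract an asteroidal triple from a branching node $K$ of a clique tree. Choosing, for each branch $T_i$, an arbitrary vertex $u_i$ appearing in $T_i$ but not in $K$ does not in general yield an AT. Concretely, let $G$ be the tree on $\{1,2,3,4,5\}$ with edges $12,13,14,25$. Its maximal cliques $\{1,2\},\{1,3\},\{1,4\},\{2,5\}$ admit a valid clique tree that is a star centered at $K=\{1,2\}$, and your recipe selects $u_1=3$, $u_2=4$, $u_3=5$. But $G$ is a caterpillar, hence an interval graph, hence AT-free; indeed there is no $3$--$5$ path avoiding $N[4]=\{1,4\}$, since every path leaving $3$ must pass through $1$. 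The failure is precisely at the point you defer to ``bookkeeping'': you argue that a neighbor of $u_i$ lying in $K$ must lie in $K\cap K_i$, but $K\cap K_i$ is a (typically nonempty) separator whose vertices may well be neighbors of $u_i$ --- in the example $1\in K\cap K_1$ is adjacent to $u_1=3$ and blocks the required paths. Two repairs are needed and neither is routine: the existence of one branching clique tree is not an obstruction to being interval (the example above shows an interval graph with a degree-$3$ clique tree), so you cannot work with an arbitrary clique tree; and the $u_i$ must be chosen deep enough in their branches that the connecting paths through $K$ genuinely dodge $N[u_i]$. This is where the real content of the theorem lives, and the known proofs (consecutive arrangements of maximal cliques, PQ-trees, or enumeration of the minimal chordal non-interval graphs) are considerably more delicate than the sketch.
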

	Intuitively, Definition~\ref{def:at} is a set of three vertices such that every pair is connected by a path that avoids the neighbors of the third.
	Roughly speaking, in the context of linear colorings, Proposition~\ref{prop:chordal_at-free} indicates that if $w$ is a center of a ``long'' $uv$-path $P$ in $G$, any vertex $w'$ such that $\psi(w) = \psi(w')$ must have a neighbor on $P$.
	We devote the rest of this section to proving Theorem~\ref{thm:interval_bound}.
	\begin{theorem}\label{thm:interval_bound}
		There exists a polynomial time algorithm that takes as input an interval graph $G$ and a linear coloring of $G$ with size $k$ and outputs a centered coloring of $G$ with size at most $k^2$.
	\end{theorem}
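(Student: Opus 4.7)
The plan is to construct a treedepth decomposition of $G$ of depth at most $k^2$ by recursive clique-separator bisection on the interval representation, then apply the canonical correspondence from Section~2.3 to extract a centered coloring of $G$ of size at most $k^2$.

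First, I would extract two bounds from the hypothesis $\chilin(G) \le k$. Since $\psi$ is proper, $\omega(G) \le k$. Since $\psi$ restricted to any path $P$ is a linear (hence, for paths, centered) coloring using at most $k$ colors, and the centered coloring number of a path on $\ell$ vertices is $\lceil\log_2(\ell+1)\rceil$, every path in $G$ has at most $2^k-1$ vertices.

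Next I would describe the recursive step. For an induced subgraph $H$ of $G$, let $\ell(H)$ denote the number of vertices on a longest path of $H$. Given $H$, I would find a point $x$ on the real line such that, writing $C_x = \{v \in V(H) : x \in I_v\}$ (a clique by the Helly property for intervals on a line), removing $C_x$ from $H$ splits it into induced subgraphs $H_L$ and $H_R$ on the intervals strictly left and strictly right of $x$ respectively, with $\ell(H_L), \ell(H_R) \le \lceil \ell(H)/2 \rceil$. I would place $C_x$ as the topmost $|C_x| \le \omega(G) \le k$ levels of the decomposition in arbitrary order, and recurse on $H_L$ and $H_R$; the base case $\ell(H) \le 1$ is trivial since $H$ is then an independent set.

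For the depth analysis, the longest-path parameter at least halves at each recursive step, so the recursion has depth at most $\lceil \log_2(\ell(G)+1)\rceil \le k$, and each level contributes at most $\omega(G) \le k$ to the depth of the decomposition. The total depth is therefore at most $k \cdot k = k^2$, giving the desired centered coloring after applying the canonical correspondence. Because an interval representation can be computed in polynomial time and each recursive step requires only a linear sweep over interval endpoints, the overall procedure runs in polynomial time.

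The main obstacle will be proving the existence of a balanced clique separator at each recursive step. The natural argument takes a longest path $P = p_1,\ldots,p_{\ell(H)}$ in $H$ and selects $x \in I_{p_{\lceil \ell(H)/2 \rceil}}$, so that the median vertex of $P$ lies in $C_x$ and each portion of $P$ on either side of $C_x$ has at most $\lceil \ell(H)/2 \rceil - 1$ vertices. If, say, $H_L$ contained a path $Q$ on more than $\lceil \ell(H)/2 \rceil$ vertices, one would concatenate $Q$ with the opposite $H_R$-half of $P$ via a bridge vertex in $C_x$ to obtain a path in $H$ on more than $\ell(H)$ vertices, contradicting the choice of $P$. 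Making this bridge argument rigorous---producing a single $C_x$-vertex simultaneously adjacent to an endpoint of $Q$ and to the appropriate endpoint of the opposite half of $P$---is the delicate technical step and will require a careful appeal to the interval representation, possibly strengthened by the AT-free structure of interval graphs guaranteed by Proposition~\ref{prop:chordal_at-free}.
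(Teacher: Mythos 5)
Your two preliminary observations are correct ($\omega(G)\le k$ by properness, and every path has fewer than $2^k$ vertices), but the crux of your argument --- the existence of a point $x$ whose clique $C_x$ splits $H$ into parts whose longest paths are at most $\lceil \ell(H)/2\rceil$ --- is false, and this is not merely a delicate step to be patched. Consider the interval graph on eleven vertices
$a_1=[0,1]$, $a_2=[0.9,2]$, $a_3=[1.9,10]$, $a_4=[3,4]$, $a_5=[3.9,5]$, $c=[9,21]$, $b_3=[20,28]$, $b_4=[26,27]$, $b_5=[25,26.1]$, $b_2=[27.9,29]$, $b_1=[28.9,30]$.
Here $\{a_1,\dots,a_5\}$ induces a path $a_1a_2a_3a_4a_5$ with the chord $a_3a_5$ (longest path $5$, and every path ending at $a_3$ has at most $3$ vertices, since $a_3$ separates $\{a_1,a_2\}$ from $\{a_4,a_5\}$); the $b_i$ form a mirror image; and $c$ is adjacent only to $a_3$ and $b_3$. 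The graph is connected and its longest path has $\ell=7$ vertices ($a_1a_2a_3\,c\,b_3b_2b_1$), so $\lceil\ell/2\rceil=4$. Yet for every $x\le 20$ the side $H_R$ contains all of $\{b_1,\dots,b_5\}$ and hence a path on $5>4$ vertices, and for every $x\ge 10$ the side $H_L$ contains all of $\{a_1,\dots,a_5\}$; so no choice of $x$ balances. Your proposed bridge argument fails exactly here: the only vertices of the two long side-paths that can be reconnected across $C_x$ are \emph{interior} vertices ($a_3$ and $b_3$), so concatenation only recovers about half of each path, and $\ell(H)$ stays well below $\ell(H_L)+\ell(H_R)$. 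Since the longest-path parameter need not decrease by a constant factor per level, your claimed recursion depth of $\log_2(\ell(G)+1)\le k$ is unsupported, and with it the $k^2$ bound.

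The paper avoids balanced separators entirely. It extracts a \emph{prevailing subgraph} $Q$: a spine built from the left-to-right ordering of maximal cliques, chosen so that every connected induced subgraph of $Q$ has a Hamiltonian path. Consequently $\psi|_Q$ is automatically a \emph{centered} coloring of $Q$, so $Q$ contributes depth at most $k$ --- not $\omega\le k$ per level of a long recursion, but $k$ once. The payoff is that every component of $G\setminus Q$ has an apex on the spine, and since $\psi$ is proper that component uses at most $k-1$ colors; the recursion is therefore on the \emph{number of colors}, giving depth $k+(k-1)^2\le k^2$ after $k$ rounds. If you want to salvage a divide-and-conquer proof, you would need a progress measure that provably shrinks and is itself bounded by a function of $k$ alone; neither the longest path (does not shrink geometrically, as above) nor the number of vertices or maximal cliques (not bounded in terms of $k$; consider a star) does the job.
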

	Our algorithm makes extensive use of the following well-known property of maximal cliques in interval graphs.
	\begin{proposition}[\cite{lekkeikerker1962representation}]\label{prop:clique_order}
		If $G$ is an interval graph, its maximal cliques can be linearly ordered in polynomial time such that for each vertex $v$, the cliques containing $v$ appear consecutively.
	\end{proposition}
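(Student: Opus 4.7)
The plan is to derive the clique ordering directly from an interval representation of $G$ by a left-to-right sweep. First, compute an interval representation $f : V(G) \to \{[a_v, b_v]\}_{v \in V(G)}$; this can be done in polynomial time (indeed in linear time, e.g.\ via PQ-trees, and in any case polynomially using the characterization of Proposition~\ref{prop:chordal_at-free}). After perturbing to ensure all $2|V(G)|$ endpoints are distinct, sweep $x$ from $-\infty$ to $+\infty$ maintaining the active set $S(x) = \{v : a_v \leq x \leq b_v\}$, and emit $S(x)$ as a clique whenever the next sweep event is a right-endpoint (so $S$ is about to shrink). Let $C_1, C_2, \ldots, C_m$ be the emitted sets in order of emission, with witnessing sweep positions $x_1 < x_2 < \cdots < x_m$.

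The first step is to verify that $\{C_1,\ldots,C_m\}$ is exactly the family of maximal cliques of $G$. Each $S(x)$ is a clique because its members pairwise share the point $x$. Conversely, given a maximal clique $C$, the intersection $\bigcap_{v \in C}[a_v,b_v]$ is nonempty by the Helly property of intervals on the line, so for any $x$ in this intersection $S(x) \supseteq C$, and maximality forces $S(x) = C$. Since $S(\cdot)$ changes only at endpoint events, $C$ remains equal to $S(x)$ throughout some maximal interval $I$ of $x$-values, and $I$ must terminate at a right-endpoint event (otherwise a left-endpoint event would add a vertex to $C$, contradicting maximality). Hence the sweep emits $C$ exactly once, and only such locally maximal sets are emitted.

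The consecutive property is then immediate: $v \in C_i$ iff $x_i \in [a_v,b_v]$, and since $x_1 < \cdots < x_m$ are real numbers in increasing order, the index set $\{i : a_v \leq x_i \leq b_v\}$ is contiguous. Runtime is polynomial, as $m \leq |V(G)|$ (each emitted $C_i$ is charged to a distinct right-endpoint event) and the sweep itself is $O(n \log n)$ after sorting endpoints. The main obstacle is the correctness argument just sketched: it rests on the Helly property combined with the careful ``emit just before a right-endpoint event'' trigger, without which one would risk either emitting non-maximal sets at interior left-endpoint events or emitting the same maximal clique multiple times.
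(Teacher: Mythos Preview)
The paper does not prove this proposition at all; it is simply quoted from Lekkerkerker and Boland as a known result. Your sweep-line argument is the standard way to establish it and is essentially correct, but there is a small genuine gap in the emission rule. You emit $S(x)$ before \emph{every} right-endpoint event, and then assert without justification that ``only such locally maximal sets are emitted.'' This is false as stated: whenever two right-endpoints occur consecutively with no left-endpoint between them, the second emission is a proper subset of the first and hence not a maximal clique. For instance, with intervals $a=[0,10]$ and $c=[6,11]$, the sweep emits $\{a,c\}$ just before $b_a=10$ and then emits $\{c\}$ just before $b_c=11$, but $\{c\}$ is not maximal.

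The fix is immediate: emit $S(x)$ only when the next event is a right-endpoint \emph{and} the previous event was a left-endpoint (or there was no previous event), i.e., precisely at the local maxima of $|S(\cdot)|$. With this rule, an emitted set $S(x)$ cannot be extended by any vertex $w$: if $b_w < x$ then $w$ was removed at an earlier right-endpoint event, so the immediately preceding event was not a left-endpoint; if $a_w > x$ then $w$ is added at the next event, so that event is not a right-endpoint. Your Helly argument already shows every maximal clique is captured, and the consecutivity argument (the set $\{i : a_v \le x_i \le b_v\}$ is an interval of indices) goes through unchanged for the corrected sequence.
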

	In particular, we identify a \emph{prevailing path} in $G$ whose vertices ``span'' the maximal cliques and a \emph{prevailing subgraph} that consists of the prevailing path as well as vertices in maximal cliques ``between'' consecutive vertices on the prevailing path.
	We will show that any linear coloring is a centered coloring when restricted to the prevailing subgraph and that after removing the prevailing subgraph, the remaining components each use fewer colors.

	Let $C_1, \dots C_m$ be an ordering of the maximal cliques of $G$ that satisfies Proposition~\ref{prop:clique_order}.
	We say vertex $v$ is \emph{introduced} in $C_{i}$ if $v\in C_i$ but $v\notin C_{i-1}$, and denote this as $I(v) = i$.
	Likewise, $v$ is \emph{forgotten} in $C_j$ if $v\in C_{j}$ but $v\notin C_{j+1}$, and denote this as $F(v) = j$.
	The procedure for constructing a prevailing subgraph and prevailing path is described in Algorithm~\ref{alg:rho}.
	This algorithm selects the vertex $v$ from the current maximal clique that is forgotten ``last'' and adds $v$ to the prevailing path and $C_{F(v)}$ to the prevailing subgraph.
	We prove in Lemma~\ref{lem:rho_hamiltonian} that if $P,Q$ are a prevailing path and subgraph, the vertices in $Q\backslash P$ can be inserted between vertices of $P$ to form a Hamiltonian path of $Q$.\looseness-1
	\begin{algorithm}
\begin{algorithmic}[1]
	\REQUIRE interval graph $G$
	\ENSURE prevailing path $P$ and prevailing subgraph $Q$
	\STATE $C_1,\dots, C_m\leftarrow$ maximal cliques of $G$ labeled in accordance with Proposition~\ref{prop:clique_order}
	\STATE $P\leftarrow \emptyset$
	\STATE $V_Q\leftarrow \emptyset$
	\STATE $i\leftarrow 1$
	\STATE $j\leftarrow 1$
	\WHILE{$i<m$}
		\STATE $v_j\leftarrow \argmax_{u\in C_i} F(u)$\label{alg_step:select_v}
		\STATE $P\leftarrow P \cdot \{v_j\}$\label{alg_step:add_to_P}
		\STATE $i\leftarrow F(v)$
		\STATE $V_Q\leftarrow V_Q\cup V(C_i)$
		\STATE $j\leftarrow j+1$
	\ENDWHILE
	\STATE $Q\leftarrow G[V_Q]$
	\RETURN $P, Q$
\end{algorithmic}
\caption{Construction of a prevailing path and subgraph.}\label{alg:rho}
\end{algorithm}

	\begin{lemma}\label{lem:rho_hamiltonian}
		Every prevailing subgraph has a Hamiltonian path.
	\end{lemma}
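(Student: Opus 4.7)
The plan is to exploit the greedy selection rule in Algorithm~\ref{alg:rho} to partition $V(Q)$ into a sequence of cliques (``slots'') and then to arrange them so that consecutive prevailing-path vertices serve as bridges between slots. Let $v_1,\ldots,v_t$ be the prevailing path produced by the algorithm and set $b_0 = 1$ and $b_j = F(v_j)$ for $j \geq 1$. By construction $v_{j+1}$ is chosen from $C_{b_j}$ as the vertex maximizing $F$, so $v_{j+1} \in C_{b_j} \cap C_{b_{j+1}}$, while $V(Q) = \bigcup_{j=1}^{t} V(C_{b_j})$.

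The key step is a structural observation about the greedy choice: $v_{j+1} \notin C_{b_{j-1}}$ for every $j \geq 1$. If this failed, $v_{j+1}$ would have been an eligible candidate in iteration $j$, and then $F(v_{j+1}) = b_{j+1} > b_j = F(v_j)$ would contradict the maximality of the choice of $v_j$ in $C_{b_{j-1}}$. Combining this with Proposition~\ref{prop:clique_order}, which says that the indices of maximal cliques containing a given vertex form a consecutive range, I would define, for each $u \in V(Q)$, the index $\sigma(u) = \min\{j : u \in C_{b_j}\}$; the observation above yields $\sigma(v_{j+1}) = j$ for all $j \geq 1$.

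Next I would partition $V(Q)$ into slots $S_j = \sigma^{-1}(j)$ for $j = 1,\ldots,t$. Each $S_j$ is contained in the clique $C_{b_j}$ and therefore induces a clique in $G$; moreover $v_{j+1} \in S_j$ for $j = 1,\ldots,t-1$, so $S_j$ is non-empty in that range. For $j < t$, order $S_j$ arbitrarily subject to the constraint that $v_{j+1}$ appears last; order $S_t$ arbitrarily, skipping it if empty. Concatenating $S_1, S_2, \ldots, S_t$ produces a sequence visiting every vertex of $Q$ exactly once. Consecutive vertices within a single slot are adjacent because the slot is a clique; at the boundary between $S_j$ and the next non-empty slot, the terminal vertex $v_{j+1}$ of $S_j$ lies in $C_{b_{j+1}}$, which contains the entire following slot, so $v_{j+1}$ is adjacent to its successor.

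The only step requiring real thought is the identification $\sigma(v_{j+1}) = j$, which is where the greedy rule of Algorithm~\ref{alg:rho} interacts with the consecutive-clique property of interval graphs; once that is in hand, the slot decomposition and the construction of the Hamiltonian path are essentially bookkeeping, so I do not anticipate any further obstacles.
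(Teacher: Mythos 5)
Your proof is correct and takes essentially the same approach as the paper: both partition $V(Q)$ according to the first selected clique $C_{F(v_j)}$ containing each vertex, observe that each block is a clique, and thread the blocks together using the prevailing-path vertices (which lie in two consecutive selected cliques) as bridges. Your explicit justification that $\sigma(v_{j+1})=j$ via the greedy maximality of $F$ is a slightly more careful version of a step the paper leaves implicit, and your slot boundaries are shifted by one vertex relative to the paper's $v_1,M_1,v_2,M_2,\dots$, but the resulting Hamiltonian path is the same.
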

	\begin{proof}
	Let $P,Q$ be the prevailing path and subgraph constructed in Algorithm~\ref{alg:rho}.
	We prove by constructing the Hamiltonian path of $Q$.
		Let $M_j$ be the set of all $u\in Q\backslash P$, for which $j$ is the smallest integer for which $u\in C_{F(v_j)}$.
		In other words $M_j$ contains the vertices in $C_{F(v_j)}$ that do not appear in $C_{F(v_{j-1})}$.
		If $\mathcal M =  \bigcup_{1\leq j\leq p} M_j$ then by construction $P\cup \mathcal M = Q$.
		Moreover, for each $u\in M_j$, $M_j\cup \{v_j,v_{j+1}\} \subseteq N[u]$.
		For each $M_j$, let $\mu^{1}_j,\mu^2_j,\dots, \mu^{|M_j|}_j$ be a ordering of $M_j$ such that $F(\mu^i_j) \leq F(\mu^{i+1}_j)$.
		Then
			$$v_1,\mu^1_1,\dots, \mu^{|M_1|}_1, v_2,\mu^{1}_2,\dots,\mu^{|M_2|}_2, \dots, v_p,\mu^{1}_p,\dots \mu^{|M_p|}_p$$
		 is a Hamiltonian path.
	\hfill\qed\end{proof}

	Although the fact that the prevailing subgraph $Q$ has a Hamiltonian path implies $Q$ has a center with respect to $\psi$, we must ensure that the proper subgraphs of $Q$ also have a center.
	In Lemma~\ref{lem:rho_centered}, we prove $\psi|_Q$ is centered by showing every proper connected subgraph of $Q$ also has a Hamiltonian path.

	\begin{lemma}\label{lem:rho_centered}
		If $Q$ is a prevailing subgraph of an interval graph $G$ and $\psi$ a linear coloring of $G$, $\psi|_{Q}$ is a centered coloring.
	\end{lemma}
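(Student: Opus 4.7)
The plan is to reduce Lemma~\ref{lem:rho_centered} to the claim that every connected induced subgraph $H$ of $Q$ admits a Hamiltonian path. Once we have that, $\psi$ being a linear coloring of the ambient $G$ forces the Hamiltonian path of $H$ to contain a vertex of unique color, and that vertex is then a center of $H$ in the sense of Definition~\ref{def:centered_coloring}, which is exactly what we need for $\psi|_Q$ to be a centered coloring.

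To prove the Hamiltonian path claim, I would revisit the ``chain of cliques'' structure of $Q$ that is implicit in the proof of Lemma~\ref{lem:rho_hamiltonian}. Writing $K_j := M_j \cup \{v_j, v_{j+1}\}$, each $K_j$ induces a clique in $G$ and consecutive $K_j$ share the spine vertex $v_{j+1}$. I would first show that $G$ contains no edge between $M_j$ and $M_{j'}$ whenever $|j-j'| \geq 2$: by the greedy rule in line~\ref{alg_step:select_v} every $u \in M_j$ satisfies $F(u) \leq F(v_{j+1})$, while any $u' \in M_{j'}$ with $j' \geq j+2$ satisfies $I(u') > F(v_{j'-1}) \geq F(v_{j+1})$, and by Proposition~\ref{prop:clique_order} the clique ranges $[I(u),F(u)]$ and $[I(u'),F(u')]$ are therefore disjoint, which prevents any adjacency. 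Combined with the analogous statement for spine vertices, this shows that $Q$ decomposes as a chain $K_a, K_{a+1}, \ldots$ whose edges live entirely inside some $K_j$ or between adjacent $K_j, K_{j+1}$. Given a connected induced $H \subseteq Q$, the set of indices $j$ with $V(H) \cap K_j \neq \emptyset$ must then form a contiguous range $[a,b]$; each $V(H) \cap K_j$ is a clique; and connectivity of $H$ forces, for each consecutive pair $j, j+1 \in [a,b]$, either $v_{j+1} \in V(H)$ or a direct edge from $V(H) \cap M_j$ to $V(H) \cap M_{j+1}$. A Hamiltonian path of $H$ is then built exactly as in Lemma~\ref{lem:rho_hamiltonian}: traverse each clique $V(H) \cap K_j$ in turn, orienting the local clique traversal so that its endpoints coincide with the bridging vertex $v_{j+1}$ or with the endpoints of the bridging edge to the next segment.

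The main obstacle I anticipate is the structural step that rules out $M_j$--$M_{j'}$ edges when $|j-j'| \geq 2$; this is where the interplay between the greedy choice of Algorithm~\ref{alg:rho} and the consecutive-cliques property of Proposition~\ref{prop:clique_order} must be leveraged carefully via bookkeeping on the $I(\cdot)$ and $F(\cdot)$ indices of vertices relative to the $F(v_j)$'s. A secondary but easier subtlety is fixing the orientation of each local clique traversal so that the bridging vertices or edges line up consistently along the whole chain, which becomes purely mechanical once the chain structure has been established.
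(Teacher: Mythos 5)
Your reduction is sound as far as it goes --- if every connected induced subgraph of $Q$ had a Hamiltonian path, then linearity of $\psi$ would give a center for each and $\psi|_Q$ would be centered --- and your structural analysis of the clique chain (no edges between $M_j$ and $M_{j'}$ for $|j-j'|\geq 2$, via the bookkeeping $F(u)\leq F(v_{j+1})$ for $u\in M_j$ and $I(u')>F(v_{j'-1})$ for $u'\in M_{j'}$) is correct. The gap is that the claim you reduce to is false: a prevailing subgraph can contain an induced $K_{1,3}$, which has no Hamiltonian path. Concretely, take the interval graph with intervals $z=[0,1]$, $v_1=[0.5,2]$, $u=[2,3]$, $v_2=[1.5,5]$, $x=[3,7]$, $y=[4,5]$, $v_3=[4.5,9]$, $w=[7,9]$, $v_4=[8.5,10]$. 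Algorithm~\ref{alg:rho} selects the prevailing path $v_1,v_2,v_3$ and puts $u\in M_1$, $\{x,y\}\subseteq M_2$, $\{w,v_4\}\subseteq M_3$, so all of $u,x,y,w$ lie in $Q$; but $G[\{u,x,y,w\}]$ is a star centered at $x$ (since $y$ meets neither $u$ nor $w$, and $u$ does not meet $w$). This is exactly the failure mode your last paragraph worries about: with the spine vertices $v_2,v_3$ absent from $H$, the segment $\{x,y\}$ must be entered and exited through the single vertex $x$, and no orientation of the local clique traversal can fix that. (The lemma itself survives this example only because properness forces $x$ to be a center of the star --- but that rescue is not available from your Hamiltonicity argument.)

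The paper avoids this by proving something weaker than Hamiltonicity of \emph{all} connected induced subgraphs: it only shows that every subgraph arising in the canonical center-removal process retains a Hamiltonian path. Starting from the Hamiltonian path of $Q$ (Lemma~\ref{lem:rho_hamiltonian}), it deletes a center $w$ and observes that the prefix and suffix of the Hamiltonian path around $w$ cover $H\setminus\{w\}$, so if $H\setminus\{w\}$ disconnects there are at most two components and each inherits a Hamiltonian path; if it stays connected, the fact that $C_{F(v_j)}\cap C_{F(v_{j+2})}=\emptyset$ forces the surviving vertices to lie in a contiguous block of overlapping selected cliques, so the induced ordering from $Q$'s Hamiltonian path is still a Hamiltonian path. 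Since the center-removal process succeeding on every component at every stage already implies the coloring is centered (the resulting canonical treedepth decomposition has distinct colors on every root-to-leaf path), the pathological induced subgraphs like the $K_{1,3}$ above never need to be handled. To repair your proof you would have to either adopt this induction along the decomposition or find a different argument for why subgraphs without Hamiltonian paths still have centers; the direct structural route you propose cannot be completed as stated.
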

	\begin{proof}
		It suffices to show that every proper, connected induced subgraph of $Q$ has a Hamiltonian path, since the existence of a Hamiltonian path implies the subgraph has a center.
		Assume $H\subseteq Q$ has a Hamiltonian path.
		Let $w$ be a center and $w_p, w_s$ be its predecessor and successor in the Hamiltonian path.
		It is clear that the subpath from the start of the Hamiltonian path to $w_p$ remains a path in $H\backslash \{w\}$; this is also true for the subpath from $w_s$ to the end.
		Therefore if $H\backslash\{w\}$ is disconnected, there are two components and both have Hamiltonian paths.

			Otherwise suppose $H\backslash \{w\}$ is connected.
		Note that if $P=\{v_1,\dots, v_p\}$ is the prevailing path generated by Algorithm~\ref{alg:rho}, $C_{F(v_j)}\cap C_{F(v_{j+2})} = \emptyset$ or else $v_{j+2}$ would be forgotten later than $v_{j+1}$ and would have been chosen to be $v_{j+1}$ instead.
		Thus, there is some $1\leq \ell\leq q\leq p$ such that $H\backslash\{w\}\subseteq C_{F(v_\ell)}\cup C_{F(v_{\ell+1})}\cup \dots \cup C_{F(v_q)}$ and since $H\backslash\{w\}$ is connected, for each $\ell\leq j< q$ the intersection of cliques $C_{F(v_j)}$ and $C_{F(v_{j+1})}$ is non-empty.
		Consequently, the ordering of the vertices in the Hamiltonian path of $Q$ must also define a Hamiltonian path of $H\backslash \{w\}$.
	\hfill\qed\end{proof}
	Since any linear coloring $\psi$ of the prevailing subgraph $Q$ must also be a centered coloring, $\td(Q)\leq |\psi|$.
	To get a bound on the treedepth of $G$, we focus on the relationship between $Q$ and $G\backslash Q$.
	In particular, we show that the components of $G\backslash Q$ use fewer than $|\psi|$ colors by proving that each such component has an apex in the prevailing path.
	\begin{lemma}\label{lem:rho_apex}
		Let $P,Q$ be a prevailing path and subgraph of an interval graph $G$.
		For each component $X$ of $G\backslash Q$, there is a vertex $a\in P$ such that $X\subseteq N(a)$.
	\end{lemma}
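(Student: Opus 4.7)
The plan is to argue via the linear ordering of maximal cliques guaranteed by Proposition~\ref{prop:clique_order}: every vertex of a component $X$ of $G\setminus Q$ must live in a single ``gap'' between two consecutive cliques of the form $C_{F(v_j)}$, and then the vertex $v_{j+1}$ of the prevailing path that spans this gap serves as the desired apex.

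First I would observe that by construction of Algorithm~\ref{alg:rho}, $V_Q \supseteq C_{F(v_j)}$ for every $j \in \{1,\dots,p\}$. Hence every vertex $u \in X$ satisfies $u \notin C_{F(v_j)}$ for all $j$, and so the consecutive range $[I(u),F(u)]$ of clique indices containing $u$ avoids the set of ``marker'' indices $\{F(v_1),\dots,F(v_p)\}$. Setting the convention $F(v_0):=0$, this means there is a unique index $\ell(u)\in\{0,1,\dots,p-1\}$ with $F(v_{\ell(u)}) < I(u) \leq F(u) < F(v_{\ell(u)+1})$ (the case $\ell(u)=p$ cannot occur since the algorithm terminates with $F(v_p)=m$).

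Next I would use the connectedness of $X$ to prove that $\ell(u)$ is constant over all $u\in X$. If $u,u'\in X$ are adjacent then they share a maximal clique $C_i$, so $i$ lies in both $[I(u),F(u)]$ and $[I(u'),F(u')]$; since these intervals both avoid every marker $F(v_j)$, they must lie in the same marker-free gap, so $\ell(u)=\ell(u')$. Propagating this along any $X$-path shows $\ell(u)$ is the same integer $\ell$ for all $u\in X$.

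Finally I would show $a := v_{\ell+1}$ is the required apex. When the algorithm selects $v_{\ell+1}$ at step \ref{alg_step:select_v}, the current index is $i=F(v_\ell)$ (or $i=1$ when $\ell=0$), and $v_{\ell+1}\in C_i$, so $I(v_{\ell+1}) \leq F(v_\ell)$; trivially $F(v_{\ell+1}) \geq F(v_\ell)$, and in fact $F(v_{\ell+1})$ is exactly the marker ending the gap. Hence the clique range of $v_{\ell+1}$ contains $[F(v_\ell)+1,\,F(v_{\ell+1})-1]$, which in turn contains $[I(u),F(u)]$ for every $u\in X$. So $u$ and $v_{\ell+1}$ share a maximal clique and are adjacent, giving $X\subseteq N(v_{\ell+1})$. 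The main subtlety to get right is the bookkeeping that $v_{\ell+1}\in C_{F(v_\ell)}\cap C_{F(v_{\ell+1})}$ and therefore lies in all intermediate cliques as well; once that is in hand, together with the gap-confinement of $X$, the apex property follows immediately.
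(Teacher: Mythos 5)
Your proof is correct and follows essentially the same route as the paper's: both confine each component of $G\setminus Q$ to a single ``gap'' between consecutive marker cliques $C_{F(v_\ell)}$ and $C_{F(v_{\ell+1})}$, and both use the fact that the path vertex spanning that gap lies in both bounding cliques (hence, by Proposition~\ref{prop:clique_order}, in every intermediate clique) to conclude it is an apex for the whole component. The only difference is bookkeeping: the paper defines the gap-restricted vertex sets $\mathcal{X}_j$ up front and shows via a separator argument that they capture all components of $G\setminus Q$, whereas you assign each vertex a gap index and propagate its constancy along edges via overlapping clique intervals.
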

	\begin{proof}
		For $1\leq j \leq p$, let $\mathcal{X}_j$ be the set of components of $G[\bigcup_{i = F(v_{j-1})+1}^{F(v_j)-1} C_i]\backslash Q$, defining $F(v_0) = 0$.
		By this definition and the fact that $v_j$ is a member of both $C_{F(v_{j-1})}$ and $C_{F(v_j)}$,  $v_j$ is a neighbor of all vertices in $X$ for each $X\in \mathcal{X}_j$.
		Thus it suffices to show that $\bigcup_{j=1}^{p} \mathcal{X}_j$ are the components of $G\backslash Q$.

			Since $V(Q) = \bigcup_{j=1}^{p} C_{F(v_j)}$, $V(G) = V(Q) \cup V(\mathcal{X}_1) \cup \dots\cup V(\mathcal X_p)$ and $V(Q)\cap \bigcup_{j=1}^{p} \mathcal X_j = \emptyset$.
		Hence, if $X\in\mathcal{X}_j$ is not a component of $G\backslash Q$, then there must be some component $X'\in \mathcal{X}_i$ for which $i\neq j$ and there exists $u\in X$ and $u'\in X'$ and $uu'\in E(G)$.
		But $C_{F(v_j)}\cup C_{F(v_{j+1})}$ has no common vertices with $X$ and separates it from any vertices in $\mathcal{X}_i$.
		An analogous statement for $X'$ is true as well, so no such edges $uu'$ exist.
		Therefore we conclude that $\bigcup_{1\leq j\leq p} \mathcal{X}_j$ are the components of $G\backslash Q$ and the lemma is proven.
	\hfill\qed\end{proof}

	We can now establish a polynomial upper bound on the treedepth of interval graphs, proving Theorem~\ref{thm:interval_bound}.
	\begin{proof}[Theorem~\ref{thm:interval_bound}]
		Let $\mathcal A$ be the algorithm that constructs a treedepth decomposition $\mathcal T$ of $G$ by finding a prevailing subgraph $Q$ (Algorithm~\ref{alg:rho}), using $\psi|_{Q}$ to create a treedepth decomposition of $Q$, and recursively constructing treedepth decompositions of $G\backslash Q$.
		If $\operatorname{depth}(\mathcal T) \leq k^2$ and $\mathcal A$ runs in polynomial time, then the canonical centered coloring of $\mathcal T$ is a centered coloring of $G$ of size at most $k^2$.
		We prove $\mathcal A$ satisfies these requirements by induction on $k = |\psi|$.
		At $k = 1$, the graph consists of isolated vertices and $\mathcal{A}$ trivially constructs a treedepth decomposition of $G$ of depth $1$ in polynomial time.

		Assume $\mathcal A$ has the desired properties for linear colorings of size at most $k-1$.
		Because the maximal cliques of an interval graph can be enumerated and ordered in polynomial time (Proposition~\ref{prop:clique_order}), identifying $Q$ via Algorithm~\ref{alg:rho} can be done in polynomial time.
		By Lemma~\ref{lem:rho_centered}, the canonical treedepth decomposition of $Q$ has depth at most $k$.
		Since every component $X$ of $G\backslash Q$ has an apex $a$ in $P$ (Lemma~\ref{lem:rho_apex}), we can assume $a$ is an ancestor in $\mathcal T$ of each vertex in $X$ (Lemma~\ref{lem:apex_order}).
		Because $\psi$ is proper, $\psi(a)$ does not appear in $\psi|_X$ and since induced subgraphs of interval graphs are themselves interval graphs, $\mathcal A$ finds a treedepth decomposition of $X$ whose depth is at most $(k-1)^2$.
		Thus $\mathcal T$ has depth $k+(k-1)^2\leq k^2$.
		The recursion only lasts $k\leq n$ steps, so $\mathcal A$ runs in polynomial time.
	\hfill\qed\end{proof}

	\section{Hardness of Recognizing Linear Colorings}
Based on the similarity in definition between linear and centered colorings, one might assume that computing them should be roughly equally difficult.
Finding a centered coloring of a fixed size is NP-hard~\cite{bodlaender1998rankings}, but given a coloring of a graph, we can recognize whether it is centered in polynomial time by attempting to create the canonical treedepth decomposition; this procedure will identify a non-centered subgraph if the coloring is not centered.
To the contrary, we will prove that \textsc{Linear Coloring Recognition}, the problem of \emph{recognizing} whether a coloring is linear, is co-NP-complete.
In order to prove the hardness of \textsc{Linear Coloring Recognition}, we first define a dual problem.
The \textsc{Non-centered Path} problem takes a graph $G$ and coloring $\psi$ as input and decides whether $G$ has a non-centered path $P$.
We focus on proving the hardness of \textsc{Non-centered Path} because a certificate to that problem is easily definable:  a path where every color appears at least twice.

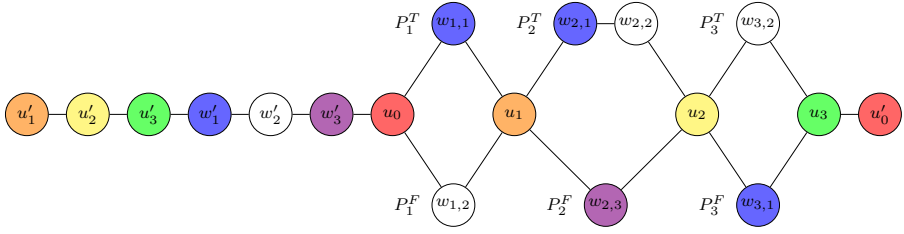
\begin{figure}
	\centering
	\resizebox{\linewidth}{!}{\begin{tikzpicture}
	\node[vertex,fill=orange!60] (u'1) at (-5,0) {$u'_1$};
	\node[vertex,fill=yellow!60] (u'2) at (-4,0) {$u'_2$};
	\node[vertex,fill=green!60 ] (u'3) at (-3,0) {$u'_3$};
	\node[vertex,fill=blue!60  ] (w'1) at (-2,0) {$w'_1$};
	\node[vertex,fill=white!60 ] (w'2) at (-1,0) {$w'_2$};
	\node[vertex,fill=violet!60 ] (w'3) at (0,0) {$w'_3$};

	\node[vertex,fill=red!60   ] (u0) at (1,0) {$u_0$};
	\node[vertex,fill=orange!60] (u1) at (3,0) {$u_1$};

	\node[vertex,fill=yellow!60] (u2) at (6,0) {$u_2$};

	\node[vertex,fill=green!60 ] (u3) at (8,0) {$u_3$};

	\node[vertex,fill=red!60   ] (u'0) at (9,0) {$u'_0$};

	\node[vertex,fill=blue!60  ] (w11) at (2,1.5) {$w_{1,1}$};
	\node[vertex,fill=white!60 ] (w12) at (2,-1.5) {$w_{1,2}$};
	\node[vertex,fill=blue!60  ] (w21) at (4,1.5) {$w_{2,1}$};
	\node[vertex,fill=white!60 ] (w22) at (5,1.5) {$w_{2,2}$};
	\node[vertex,fill=violet!60 ] (w23) at (4.5,-1.5) {$w_{2,3}$};
	\node[vertex,fill=blue!60  ] (w31) at (7,-1.5) {$w_{3,1}$};
	\node[vertex,fill=white!60 ] (w32) at (7,1.5) {$w_{3,2}$};

	\node (T1) at (1.25,1.5) {$P_1^T$};
	\node (T2) at (3.25,1.5) {$P_2^T$};
	\node (T3) at (6.25,1.5) {$P_3^T$};
	\node (F1) at (1.25,-1.5) {$P_1^F$};
	\node (F2) at (3.75,-1.5) {$P_2^F$};
	\node (F3) at (6.25,-1.5) {$P_3^F$};

	\draw (u0) -- (w11) -- (u1);
	\draw (u0) -- (w12) -- (u1);
	\draw (u1) -- (w21) -- (w22) -- (u2);
	\draw (u1) -- (w23) -- (u2);
	\draw (u2) -- (w31) -- (u3);
	\draw (u2) -- (w32) -- (u3);

	\draw (u'1) -- (u'2) -- (u'3) -- (w'1) -- (w'2) -- (w'3) -- (u0);
	\draw (u3) -- (u'0);

\end{tikzpicture}}
	\caption{The graph $G$ and coloring $\psi$ for $\Phi=(x_1\vee x_2 \vee \neg x_3)\wedge (\neg x_{1}\vee x_2 \vee x_3)\wedge (\neg x_2)$.}\label{fig:co_np_hardness}
\end{figure}

\begin{theorem}\label{thm:non-centered}
	\textsc{Non-centered Path} is NP-complete.
\end{theorem}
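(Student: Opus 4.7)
The plan is to reduce 3-SAT to \textsc{Non-centered Path}; membership in NP is immediate, since a non-centered path together with its vertex sequence is a polynomial-sized witness and checking it amounts to verifying adjacency of consecutive vertices and counting color multiplicities. Given a 3-CNF formula $\Phi$ with variables $x_1,\dots,x_n$ and clauses $C_1,\dots,C_m$, I would construct $(G,\psi)$ mirroring Figure~\ref{fig:co_np_hardness}, out of three pieces: a \emph{variable chain} with spine vertices $u_0,u_1,\dots,u_n$ where each $u_i$ receives a private ``spine'' color $r_i$, and between consecutive spine vertices two internally disjoint $u_{i-1}u_i$-paths $P_i^T$ and $P_i^F$ whose internal vertices encode the literal occurrences of $x_i$ (for each clause $C_j$ containing $x_i$ positively, $P_i^T$ gets an internal vertex of ``clause'' color $c_j$; similarly for $\neg x_i$ and $P_i^F$); a \emph{verification chain} $u_1',u_2',\dots,u_n',w_1',w_2',\dots,w_m',u_0$ where $u_i'$ carries spine color $r_i$ and $w_j'$ carries clause color $c_j$; and a pendant vertex $u_0'$ of color $r_0$ attached only to $u_n$. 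Each spine color $r_i$ then appears on exactly the pair $\{u_i,u_i'\}$, while each clause color $c_j$ appears on $w_j'$ together with one internal variable-gadget vertex per literal occurrence of $C_j$.

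For the forward direction, a satisfying assignment $\alpha$ yields the non-centered path $u_1',u_2',\dots,u_n',w_1',\dots,w_m',u_0,Q_1,u_1,Q_2,u_2,\dots,Q_n,u_n,u_0'$, where $Q_i = P_i^T$ if $\alpha(x_i) = \text{True}$ and $Q_i = P_i^F$ otherwise. Each spine color appears at its pair $u_i,u_i'$, and each clause color $c_j$ appears at $w_j'$ and again at some internal vertex of some $Q_i$ corresponding to a literal satisfying $C_j$, which exists because $\alpha \models C_j$. For the backward direction, I would argue that every non-centered path necessarily has this ``canonical'' shape. Because $u_i$ and $u_i'$ are the only $r_i$-colored vertices, the path must contain both; because $u_0'$ and $u_1'$ have degree $1$ in $G$ and must be visited, they are forced to be the two endpoints; and the path-like structure of the verification chain together with the uniqueness of spine colors then forces the traversal to sweep the verification chain, cross over at $u_0$, and thread through the variable chain in order, selecting exactly one of $P_i^T,P_i^F$ per variable gadget. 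Reading off these choices defines an assignment $\alpha$, and the requirement that each clause color $c_j$ appear at least twice forces some chosen $Q_i$ to contain a $c_j$-colored internal vertex, i.e., a literal satisfying $C_j$; hence $\alpha \models \Phi$.

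The step I expect to be the main obstacle is the rigorous ``canonical shape'' argument in the backward direction: one must rule out paths that traverse the variable chain non-monotonically, that make detours inside a single variable gadget, or that hop between the verification chain and the variable chain in unintended ways. The cleanest route seems to be to exploit two structural facts simultaneously: each spine color has exactly two representatives (forcing both to lie on the path), and each $u_i$ with $0<i<n$ is a cut vertex separating the subgraph $P_i^T\cup P_i^F$ from $P_{i+1}^T\cup P_{i+1}^F$, so a simple path visiting both $u_{i-1}$ and $u_{i+1}$ must enter gadget $i$ through $u_{i-1}$ and leave through $u_i$, traversing exactly one of $P_i^T,P_i^F$. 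An analogous cut-vertex analysis on the verification-chain side (noting that the chain itself is an induced path so that its internal vertices have degree $2$) then pins the entire path down to the canonical form used above.
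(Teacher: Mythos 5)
Your proposal is correct and follows essentially the same route as the paper: the same variable-chain/verification-chain/pendant construction (you even mirror Figure~\ref{fig:co_np_hardness}), the same coloring pairing each spine color on $\{u_i,u_i'\}$ and each clause color on $w_j'$ plus its literal occurrences, and the same backward argument forcing the canonical $u_1'$-to-$u_0'$ shape via the two-representative colors and the cut vertices $u_i$. The only cosmetic difference is that the paper reduces from general CNF-SAT and explicitly preprocesses $\Phi$ so that every variable occurs in both polarities (ensuring both $P_i^T$ and $P_i^F$ are nonempty); you would want the analogous normalization for 3-SAT.
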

\begin{proof}
	A certificate to \textsc{Non-centered Path} can be verified in linear time by iterating over all vertices in the path and counting color occurrences.
	Thus, \textsc{Non-centered Path} is in NP.

		We prove NP-hardness by reducing from \textsc{CNF-SAT}.
	Given a \textsc{CNF-SAT} formula $\Phi$ with variables $x_1,\dots x_n$ and clauses $C_1,\dots C_m$, we construct a graph $G$ and coloring $\psi$ that will have a non-centered path if and only if $\Phi$ is satisfiable.
	We assume that $\Phi$ satisfies the following properties:
	\begin{enumerate}[label=(\arabic*)]
		\item Every variable appears at most once in each clause.
		\item No clause contains both a variable and its negation.
		\item Every variable appears as a positive literal and negative literal.
	\end{enumerate}
	We can assume (1) since the disjunction operation is idempotent.
	Every clause for which (2) does not hold is satisfied by any truth assignment of the variables and thus can be removed without changing the satisfiability of $\Phi$.
	If variable $x_i$ appears only positively then assigning $x_i$ to be false does not cause any clauses to be satisfied.
	Therefore, it is sufficient to set $x_i$ to true and only consider the clauses of $\Phi$ that do not contain $x_i$; since the analogous statement is true when $x_i$ does not appear positively, we can assume (3).

		The variables of $\Phi$ are represented by a set of vertices $U = \{u_0,\dots, u_n\}$.
	For each $x_i$, we connect $u_{i-1}$ and $u_i$ with two paths $P^{T}_i$ and $P^F_i$; we will force the non-centered path to contain vertices from exactly one of $P^T_i$ and $P^F_i$, which will correspond to whether $x_i$ was set to true or false.
	The path $P^T_i$ contains one vertex for each $C_j$ in which $x_i$ appears positively while $P^F_i$ contains one vertex for each $C_j$ in which the negation of $x_i$ appears.
	By assumption (2), we can uniquely label the vertex on $P^T_i\cup P^F_i$ corresponding to clause $C_j$ as $w_{i,j}$ and the order of the vertices on $P^T_i$ and $P^F_i$ can be chosen arbitrarily.
	To complete the construction of $G$, we add path $P_{0} = u'_1, u'_2, \dots, u'_n, w'_1, w'_2, \dots, w'_m$ such that $w'_m$ is adjacent to $u_0$ and all other vertices on $P_0$ have no additional edges.
	Finally, we attach a pendant vertex $u'_0$ to $u_n$.
	Since each vertex $w_{i,j}$ corresponds to a unique literal in $\Phi$ and $|U| + |P_0| = 2n+m+1$, $G$ has size linear in the size of $\Phi$.

		To encode satisfaction of clauses, we color $G$ with coloring $\psi: V(G)\to \{0,\dots, n+m\}$ such that $\psi(u_i) = \psi(u'_i) = i$ and $\psi(w'_j) = \psi(w_{i,j}) = n+j$.
	In this way, we force any non-centered path to contain all colors and color $j+n$ appears twice if and only if $C_j$ is satisfied.
	An example can be found in Figure~\ref{fig:co_np_hardness}.

		We now prove that $\Phi$ is satisfiable iff $G$ contains a path $Q$ with no center.
	Given a satisfying assignment of $\Phi$, let $P^*_i$ be $P^T_i$ if $x_i$ is set to true and $P^F_i$ if $x_i$ is set to false.
	Then $Q = P_0\cdot u_0 \cdot P^*_1\cdot u_1 \cdot \dots \cdot P^*_n \cdot u_n \cdot u'_0$, is a non-centered path since it contains all pairs $u_i, u'_i$ and $\bigcup_{1\leq i\leq n} P^*_i$ contains a vertex with the same color as each vertex in $\bigcup_{1\leq j\leq m} w'_j$.

		To prove the reverse direction suppose $G$ contains a non-centered path $Q$.
	Let $U' = \{u'_0,\dots, u'_n\}$.
	Since each vertex in $U'$ shares a color with exactly one other vertex and that vertex is a member of $U$, $Q$ contains a vertex from $U$ iff $Q$ contains a vertex from $U'$.
	By our construction of $P_0$ and assumptions about $\Phi$, no component of $G\backslash (U\cup U')$ contains two vertices of the same color.
	Thus, $Q$ must contain vertices from $U$, $U'$, and $G\backslash (U\cup U')$.
	For any $0\leq i\neq j\leq n$, every $u_ju'_j$ path contains $u_i$ or $u'_i$, which implies that $(U\cup U')\subset Q$ and $Q$ is a $u'_1u'_0$ path.
	In order for $Q$ to be connected, $P_0\subseteq Q$ and in order for it to be a path, exactly one of $P^T_i$ and $P^F_i$ (denote it $P^*_i$) is a subpath of $Q$ for each $1\leq i\leq n$.
	Since the colors in $w'_1,\dots w'_m$ are unique, $\bigcup_{1\leq i\leq n} P^*_i$ contains at least one vertex of each color on $[n+1, n+m]$, which corresponds to a selection of truth assignments to the variables of $\Phi$ such that every clause is satisfied.
\hfill\qed\end{proof}

\begin{corollary}
	\textsc{Linear Coloring Recognition} is co-NP-complete.
\end{corollary}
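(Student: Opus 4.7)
The plan is to observe that \textsc{Linear Coloring Recognition} is essentially the complement of \textsc{Non-centered Path} and therefore inherits co-NP-completeness directly from Theorem~\ref{thm:non-centered}. Concretely, a pair $(G,\psi)$ is a YES-instance of \textsc{Linear Coloring Recognition} precisely when $G$ contains no path $P$ on which every color of $\psi|_P$ appears at least twice, i.e., precisely when $(G,\psi)$ is a NO-instance of \textsc{Non-centered Path}.

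First I would verify co-NP membership: a \emph{disqualifying} certificate for $\psi$ being linear is simply a non-centered path in $G$, which can be checked in polynomial time by scanning its vertex sequence and counting the multiplicity of each color, exactly as in the first paragraph of the proof of Theorem~\ref{thm:non-centered}. Hence the complement problem is in NP and \textsc{Linear Coloring Recognition} is in co-NP.

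For co-NP-hardness I would give a (trivial) polynomial-time reduction from the complement of \textsc{Non-centered Path}, which is co-NP-hard by Theorem~\ref{thm:non-centered}. Given an instance $(G,\psi)$ of \textsc{Non-centered Path}, output the very same pair $(G,\psi)$ as an instance of \textsc{Linear Coloring Recognition}. By Definition~\ref{def:linear_coloring}, $\psi$ is a linear coloring of $G$ iff $G$ has no non-centered path, so $(G,\psi)$ is a YES-instance of \textsc{Linear Coloring Recognition} iff it is a NO-instance of \textsc{Non-centered Path}. This is a valid Karp reduction from $\overline{\textsc{Non-centered Path}}$ to \textsc{Linear Coloring Recognition}, which is exactly what is required for co-NP-hardness.

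There is essentially no technical obstacle here; the only subtlety worth spelling out is that Definition~\ref{def:p_linear}/\ref{def:linear_coloring} quantifies over \emph{all} paths in $G$ (including non-induced ones), which matches the object being searched for by \textsc{Non-centered Path} as constructed in the proof of Theorem~\ref{thm:non-centered}, so the equivalence between the two problems is exact and no auxiliary gadgetry is needed.
\qed
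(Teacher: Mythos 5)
Your proposal is correct and is exactly the argument the paper intends: the corollary is stated as an immediate consequence of Theorem~\ref{thm:non-centered} because \textsc{Linear Coloring Recognition} is precisely the complement of \textsc{Non-centered Path}, so NP-completeness of the latter yields co-NP-completeness of the former. Your explicit spelling out of co-NP membership and the identity reduction matches the paper's (implicit) reasoning.
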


The co-NP-hardness of recognizing linear colorings is compounded by three stronger hardness implications.
First, the coloring $\psi$ given in Theorem~\ref{thm:non-centered} has size $m+n+1$, which means that unless the exponential time hypothesis~\cite{impagliazzo2001complexity} fails, there is no $2^{o(k)}$ algorithm to recognize a linear coloring of size $k$.
Second, the graph $G$ constructed in the proof of Theorem~\ref{thm:non-centered} is outerplanar with pathwidth two, which implies that neither treewidth-style dynamic programming nor a Baker-style layering approach is likely to solve this problem efficiently.
Finally, by subdividing each edge and coloring all subdivision vertices with a (single) new color, we obtain a bipartite graph with degeneracy two, proving hardness for each of those classes.
Nonetheless, the fact that $\chicen(G) = O(\log m + \log n)$ while $|\psi| = m+n+1$ leaves open the possibility that \textsc{Linear Coloring Recognition} becomes easier for colorings of minimum size.

	\section{Conclusion}
We have introduced $p$-linear and linear colorings as an alternative to $p$-centered and centered colorings for use in algorithms for classes of bounded expansion.
The $p$-linear colorings are computable in polynomial time and require a constant number of colors in classes of bounded expansion, while inducing graphs of bounded treedepth for all small sets of colors, allowing direct substitution in existing algorithmic pipelines.
A major direction for future work is to bring the upper bound on $\tmax(k)$ of $\operatorname{poly}(k)$ closer to the lower bound of $2k$.
In particular, it appears our current toolkit for analyzing linear colorings must be expanded in order to prove (or disprove) Conjecture~\ref{conj:two_chilin}.
We also believe it is worth studying whether recognizing linear colorings can be done in polynomial time if we assume the coloring is of size $\chilin(G)$.
Finally, using $p$-linear colorings in practice will require an efficient method for translating a linear coloring into a treedepth decomposition.
Although there exist general-purpose algorithms to find treedepth decompositions efficiently in graphs of bounded linear coloring number (e.g.~\cite{treedepthExact}), a more specialized algorithm that avoids ``heavy machinery'' is likely necessary to be practically useful.

\section*{Acknowledgments}
\begin{wrapfigure}{r}{0.17\textwidth}
\includegraphics[width=.8\linewidth]{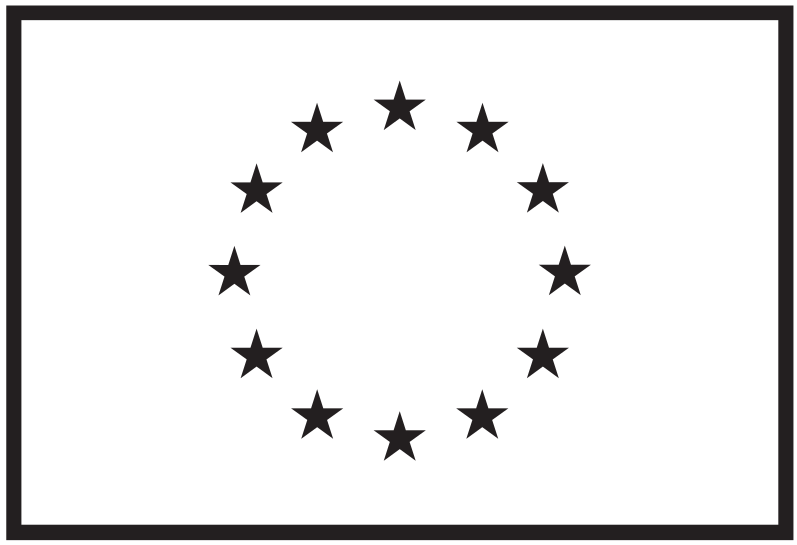}\\[3mm]
\includegraphics[width=.8\linewidth]{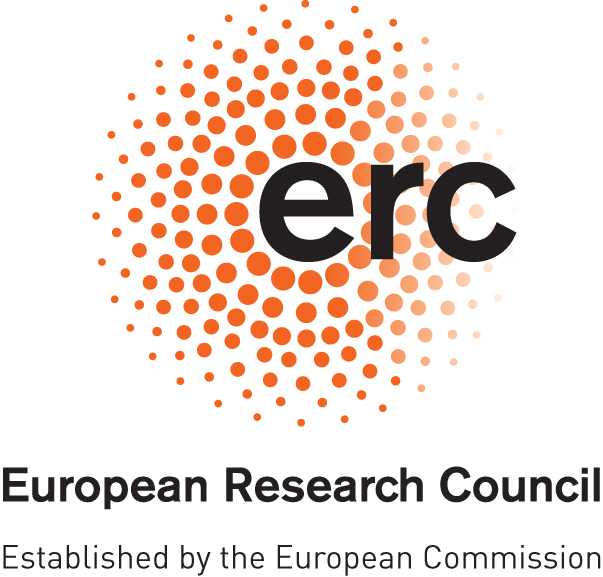}
\end{wrapfigure}
The authors would like to thank Felix Reidl and Fernando S\'{a}nchez-Villaamil for bringing these colorings to our attention and several anonymous reviewers for their helpful suggestion.
This work was supported in part by the DARPA GRAPHS Program and the Gordon \& Betty Moore Foundation's Data-Driven Discovery Initiative through Grants SPAWARN66001-14-1-4063 and GBMF4560 to Blair D. Sullivan.
The research of Marcin Pilipczuk is a part of a project that has received funding from the European Research Council (ERC) under the European Union's Horizon 2020 research and innovation programme, grant agreement No 714704.

	\bibliographystyle{spmpsci}
	\bibliography{lc.bib}
\end{document}